\documentclass[sigconf]{acmart}
\AtBeginDocument{%
	}

\setcopyright{acmlicensed}
\makeatletter
\def\ACM@cc@type{by}   
\def\@ACM@license@logo@pdf{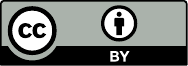}  
\makeatother
\copyrightyear{2026}
\acmYear{2026}
\acmDOI{xxxx}
\acmConference[Conference acronym 'Internetware2026]{the 17th International Conference on Internetware}{July 18--20,
	2026}{Gold Coast, Australia}
\acmISBN{978-1-4503-XXXX-X/2018/06}

\usepackage{algorithmic}
\usepackage{algorithm}

\usepackage{subfigure}

\usepackage{cleveref}
\Crefname{figure}{fig}{figures}
\Crefname{figure}{Fig}{Figures}
\Crefname{table}{Table}{Tables}
\Crefname{alg}{Algorithm}{Algorithms}

\usepackage{enumitem}
\usepackage{listings}

\newcommand{\Com}[1]{\textcolor{blue!70!black}{{#1}}}

\begin{document}
	
	\title{TRM-Raft: A Byzantine-Resistant Raft Consensus via Integrated Trust and Reputation Model}
	
	
	
	\author{Jie Zhang}
	\authornote{Jie Zhang and Xubo Fan contributed equally to this research.\\
		Xiaohongli is the corresponding author.}
	\orcid{0000-0001-8476-0215}
	\affiliation{%
		\institution{Tianjin University}
		\city{Tianjin}
		\country{China}
	}
	\email{jackzhang@tju.edu.cn}
	
	\author{Xubo Fan}
	\authornotemark[1]
	\affiliation{%
		\institution{Tianjin University}
		\city{Tianjin}
		\country{China}
	}
	\email{xubofan@tju.edu.cn}

	\author{Xiaohong Li}
	\correspondingauthor
	\affiliation{%
		\institution{Tianjin University}
		\city{Tianjin}
		\country{China}
	}
	\email{xiaohongli@tju.edu.cn}
	
	\author{Zhiyong Feng}
	\affiliation{%
		\institution{Tianjin University}
		\city{Tianjin}
		\country{China}
	}
	\email{zyfeng@tju.edu.cn}

	\renewcommand{\shortauthors}{Zhang et al.}
	
	\begin{abstract}

		Internetware envisions autonomous software entities dynamically collaborating over the open, evolving Internet. A key enabler of such systems is the Raft consensus protocol, which is widely adopted for its simplicity and high performance in distributed coordination, e.g., in service registries, cloud orchestration, and permissioned blockchains. However, Raft’s design assumes a purely crash-fault model, making it inherently vulnerable to Byzantine behaviors such as election forgery and log tampering when deployed in the hostile, dynamic Internet environment. Existing Byzantine fault-tolerant protocols incur prohibitive communication overhead and invasive architectural changes, while point-wise hardening attempts for Raft fail to provide unified, adaptive defense.
		
		In this paper, we propose \textbf{TRM-Raft}, a Byzantine-resistant enhancement of Raft that \textit{non-intrusively} integrates a Blockchain-based Trust and Reputation Model (B-TRM) into the consensus core. The core idea is to quantify multi-dimensional node behaviors, apply adaptive penalties that distinguish accidental faults from persistent malice, and embed reputation signals directly into leader election and log replication. Specifically, a reputation-aware election mechanism detects and harshly penalizes term/index forgery, keeping low-reputation nodes out of the leader role. A Schnorr-signature-based leader restriction mechanism enables followers to instantly verify log integrity; any tampering triggers reputation decay and leader replacement. Implemented and evaluated in a realistic Internetware setting using Hyperledger Fabric, TRM-Raft maintains a malicious leader ratio below 5\% even when 40\% of nodes are Byzantine, while introducing less than 10\% throughput degradation and under 5\% latency increase compared to vanilla Raft. TRM-Raft thus provides a lightweight and practical path toward trustworthiness for the broad class of Internetware systems that rely on Raft.
	\end{abstract}
	
	\begin{CCSXML}
		<ccs2012>
		<concept>
		<concept_id>10002978.10003006.10003013</concept_id>
		<concept_desc>Security and privacy~Distributed systems security</concept_desc>
		<concept_significance>500</concept_significance>
		</concept>
		<concept>
		<concept_id>10002978.10003022.10003028</concept_id>
		<concept_desc>Security and privacy~Domain-specific security and privacy architectures</concept_desc>
		<concept_significance>300</concept_significance>
		</concept>
		<concept>
		<concept_id>10010147.10010919</concept_id>
		<concept_desc>Computing methodologies~Distributed computing methodologies</concept_desc>
		<concept_significance>500</concept_significance>
		</concept>
		<concept>
		<concept_id>10010520.10010521.10010537.10010540</concept_id>
		<concept_desc>Computer systems organization~Peer-to-peer architectures</concept_desc>
		<concept_significance>500</concept_significance>
		</concept>
		<concept>
		<concept_id>10010520.10010575.10010577</concept_id>
		<concept_desc>Computer systems organization~Reliability</concept_desc>
		<concept_significance>300</concept_significance>
		</concept>
		</ccs2012>
	\end{CCSXML}
	
	\ccsdesc[500]{Security and privacy~Distributed systems security}
	\ccsdesc[300]{Security and privacy~Domain-specific security and privacy architectures}
	\ccsdesc[500]{Computing methodologies~Distributed computing methodologies}
	\ccsdesc[500]{Computer systems organization~Peer-to-peer architectures}
	\ccsdesc[300]{Computer systems organization~Reliability}

	\keywords{Raft consensus, Byzantine fault tolerance, Trust and reputation model, Blockchain security, Leader election, Schnorr signature}

	
	\maketitle

\section{Introduction}
\label{Sec 1}

The Internetware paradigm envisions autonomous software entities dynamically collaborating over the open, ever-changing Internet~\cite{DBLP:conf/internetware/ChenZYZSZ20,10.1145/3755881.3755927}. Modern Internet-scale distributed systems, from cloud orchestration platforms and service registries to permissioned enterprise blockchains \cite{DBLP:journals/tsc/ZhangLZFXHB26,DBLP:conf/cscwd/ZhangLZHBF25,DBLP:conf/tase/MaoZZL25}, critically rely on consensus protocols to maintain a consistent and reliable system state in this dynamic environment\cite{DBLP:journals/comsur/XiaoZLH20,DBLP:journals/csur/00310J23}. Among these protocols, the Raft consensus algorithm~\cite{DBLP:conf/usenix/OngaroO14} has gained widespread adoption due to its simplicity, understandability, and high performance. Prominent software infrastructures such as etcd (the backbone of Kubernetes)\footnote{Raft is the core of distributed container consensus algorithms.}, Quorum\footnote{Raft is one of two consensus protocols}, and Hyperledger Fabric\footnote{Supports the Raft consensus starting from version 1.4.1}~\cite{Androulaki2018} all leverage Raft to coordinate replicas.

However, Raft's design assumes a purely Crash Fault Tolerant (CFT) model, where nodes may only stop working but never deviate from the protocol~\cite{DBLP:conf/usenix/OngaroO14}. In the open Internet environment that Internetware systems inhabit, this non-Byzantine assumption is fragile~\cite{DBLP:conf/usenix/OngaroO14,DBLP:conf/icbc2/BellajOBCM22}. Nodes can be compromised by external attackers, exhibit selfish behaviors, or become faulty in adversarial ways, thereby threatening the very consistency and availability that consensus is meant to provide~\cite{WANG2022,RaftvotingMechComp, ZHETU2022109404,DBLP:journals/tosn/TianBSZG24}. Two particularly severe classes of Byzantine behavior against Raft are:


\textbf{Forgery Attacks on Leader Election}
In Raft, leader election is predicated on candidates possessing the “latest” log, determined by term number ($Term$) and log index ($Index$). A malicious node can exploit this by \textit{forging} artificially high $Term$ and $Index$ values. By minimizing its election timeout ($ET$), the attacker can transition to candidate earlier than honest nodes, broadcast a \texttt{RequestVote} RPC with forged metadata, and win the election before legitimate candidates can react. Once elected, the malicious leader can suppress further elections via heartbeats, even if later replaced, it can repeat the attack by incrementing its term again. This vulnerability, highlighted by \cite{WANG2022} and \cite{RaftvotingMechComp}, allows a single malicious node to persistently disrupt leader legitimacy.

\textbf{Tampering Attacks on Log Replication}
After election, the leader is solely responsible for replicating client requests (logs) to followers. Raft ensures log consistency through term and index matching but does \textit{not} verify the integrity of the log content ($entries$). A malicious leader can therefore \textit{tamper} with $entries$ while preserving correct $Term$ and $Index$ structure, bypassing Raft’s native consistency checks. As demonstrated in \cite{RaftSignCompareSRaft,ZHETU2022109404, RaftSigntCompareRBraft,DBLP:journals/tosn/TianBSZG24}, such tampering can corrupt the blockchain state, compromise data integrity, and undermine trust in the system.


While Byzantine Fault Tolerant (BFT) protocols like PBFT~\cite{DBLP:conf/osdi/CastroL99} can inherently resist such adversaries, their high communication complexity ($O(N^2)$) and demanding quorum requirements $(n=3f+1)$ introduce substantial performance penalties and architectural intrusiveness~\cite{DBLP:journals/tc/ChenEMH23}. For many Internetware systems, completely replacing an existing Raft stack with a full BFT protocol is both costly and impractical~\cite{DBLP:conf/internetware/ChenZYZSZ20,DBLP:conf/internetware/MiaoZCYZJZ22,11534385,JieZHANG_2104}. Existing defenses that retrofit Raft with static thresholds~\cite{WANG2022} or hash-chain verification~\cite{ZHETU2022109404} focus on isolated attack vectors, lack a unified trust management framework, and often fail to adapt to the dynamic and evolving threat landscape of the Internet.

To bridge this gap, we propose \textbf{TRM-Raft}, a Byzantine-Resistant Raft consensus achieved by integrating a blockchain-based Trust and Reputation Model (TRM) directly into the consensus core. Our key insight is that malicious behaviors in Internetware systems, whether forgery, tampering, or other camouflaged attacks (e.g., On-Off attacks\cite{DBLP:conf/blockchain2/CamiloRSD20}, manifest as observable anomalies that can be quantified and penalized through a dynamic reputation system, \textit{without redesigning the fundamental Raft protocol}. TRM-Raft transforms the static trust assumption of Raft into an adaptive, behavior-aware trust management while preserving its proven efficiency and simplicity.
By co-designing a behavioral trust model with cryptographic verification, TRM-Raft is, to the best of our knowledge, the first Raft variant that unifies dynamic reputation-based election gating with Schnorr-signature-verified log integrity. This combination forces an adversary to simultaneously defeat anomaly detection and signature forgery, a synergy absent from prior point-wise defenses.

Our key contributions are:
\begin{enumerate}
\item \textbf{Unified Trust and Integrity Framework:} The first integration of a multi-dimensional reputation model (B-TRM) with Schnorr signatures in Raft, jointly addressing election forgery and log tampering through adaptive behavioral assessment and cryptographic verification.
\item \textbf{Reputation-Aware Election Mechanism:} A dynamic monitoring scheme that detects anomalous term/index surges during elections, penalizes forgery attempts by halving the attacker's reputation, and excludes low-reputation nodes from voting and candidacy, thus guaranteeing that leadership remains in trustworthy hands.
\item \textbf{Leader Restriction with Schnorr Signatures:} An efficient cryptographic verification layer embedded in the log replication phase. Followers verify log integrity via Schnorr signatures; any tampering leads to immediate reputation penalties and, if sustained, triggers leader replacement, ensuring end-to-end state trustworthiness.
\item \textbf{Practical Validation:} We implement TRM-Raft on Hyperledger Fabric, a representative Internet-scale permissioned platform, and demonstrate that it maintains malicious leader prevalence below 5\% even when 40\% of nodes are Byzantine, while incurring modest performance overhead (less than 10\% in throughput and 5\% in latency) compared to vanilla Raft.
\end{enumerate}

The remainder of this paper is organized as follows: \Cref{Sec 2} reviews related work. \Cref{Sec 3} formalizes the threat model for Internetware-Raft systems. \Cref{Sec 4} presents the B-TRM reputation model. \Cref{Sec 5} describes TRM-Raft's design and provides a security analysis. \Cref{Sec 6} gives experimental evaluation, \Cref{sec:limitations} discusses limitations, and \Cref{Sec 7} concludes.

\section{Related Works}
\label{Sec 2}

\subsection{Raft, Byzantine behavior, and modern BFT baselines}
Raft~\cite{DBLP:conf/usenix/OngaroO14} is a leader-based crash-fault tolerant consensus protocol, designed with clarity, implementability, and predictable performance in mind, which has established it as the consensus backbone for numerous Internetware infrastructures~\cite{DBLP:journals/comsur/XiaoZLH20}. In benign deployments, its majority quorum and one-round commit path ($\lfloor N/2 \rfloor + 1$ nodes for committing log entries\cite{DBLP:conf/icbc2/BellajOBCM22}) are sufficient to support high throughput with relatively low coordination overhead. However, its vulnerability to Byzantine participants directly threatens the reliability of many platforms. Fundamentally, it assumes that participants adhere strictly to the protocol, barring crash-stop failures. Once a node can lie about its election state~\cite{WANG2022, RaftvotingMechComp}, equivocate on log progress~\cite{ZHETU2022109404, RaftSigntCompareRBraft}, or tamper with replicated content~\cite{RaftSignCompareSRaft,DBLP:journals/tosn/TianBSZG24}, Raft's native checks are no longer sufficient.

Classical Byzantine fault tolerant protocols such as PBFT \cite{DBLP:conf/osdi/CastroL99} established the standard $3f+1$ resilience threshold (at least $\lfloor 2N/3 \rfloor + 1$ nodes to tolerate up to $f$ malicious nodes out of $N=3f+1$\cite{DBLP:conf/icbc2/BellajOBCM22}), while more recent protocols such as HotStuff \cite{DBLP:conf/podc/YinMRGA19} and modern blockchain implementations such as CometBFT\footnote{\url{https://github.com/cometbft/cometbft.git}} reduce practical overhead but still preserve quorum-based Byzantine safety~\cite{DBLP:journals/tc/ChenEMH23}. These systems are stronger than Raft under an arbitrary Byzantine adversary, but their message patterns, view-change logic, and recovery machinery are more expensive than CFT protocols~\cite{DBLP:journals/csur/00310J23}. The design space therefore remains split between low-cost CFT and high-cost BFT~\cite{RaftvotingMechComp, DBLP:journals/tosn/TianBSZG24}. TRM-Raft is positioned between the two: it aims to suppress a restricted but practically relevant set of observable Byzantine deviations without claiming equivalence to strict BFT.

\subsection{TRMs in distributed systems}
Trust and Reputation Models (TRMs) have been used to score participants based on past behavior, local observations, and indirect evidence \cite{ZHETU2022109404,11164303,HUANGMinmin_733,PoR_Gai}. In distributed systems, such models are most useful when the system can observe repeated actions, correlate them across peers, and enforce penalties that are visible to the rest of the network~\cite{11164303,HUANGMinmin_733}. This makes them attractive for permissioned consensus settings, where node identities are known and the protocol already maintains a shared state.

At the same time, reputation mechanisms are not a substitute for cryptographic safety~\cite{11164303}. They are reactive, can be gamed, and may fail against a strategic adversary that behaves honestly long enough to accumulate trust before misbehaving~\cite{DBLP:conf/icbc2/MakhdoomTZAL20,RaftvotingMechComp}. Prior work on accountability and alternative fault models emphasizes this point and suggests that trust-based mechanisms should be framed as partial defenses rather than as full Byzantine guarantees~\cite{DBLP:conf/icdcs/CivitGG21}. Our design follows that principle.


\subsection{Cryptographic integrity for replicated logs}
Digital signatures are a standard method for binding a message to its origin and contents. Schnorr signatures \cite{SchnorrSig} are attractive because they are compact, efficient to verify, and well suited to modern blockchain-style implementations \cite{DBLP:journals/dcc/MaxwellPSW19}. 
In the context of hardening Raft against tampering attacks, Schnorr signatures provide a cryptographic mechanism for log integrity verification\cite{DBLP:conf/icuimc/TianLZZ21}. By producing a Schnorr signature in the propose phase, followers can verify the signature against the known public key to detect the manipulation instantly. This directly counters the tampering attack described in \cite{ZHETU2022109404,DBLP:journals/tosn/TianBSZG24,DBLP:conf/icuimc/TianLZZ21}, where a malicious leader alters log content while preserving correct $Term$ and $Index$ metadata.

In TRM-Raft, signatures do not replace consensus; they protect the integrity of replicated entries so that a leader cannot silently alter payloads after a request has been endorsed or accepted for replication. The signature check therefore complements the reputation mechanism: one detects repeated protocol abuse, the other prevents content tampering from being committed.

\subsection{Research gap}
Existing work typically addresses either election forgery or log tampering in isolation, or assumes a general BFT protocol, leaving a critical gap for distributed Internet systems requiring stronger than CFT resilience with transparent enforcement and limited overhead. TRM-Raft targets this gap by providing a practical, deployable mechanism through the integration of TRM into Raft's core phases, leader election and log replication, to unify defenses against multiple attack vectors. While TRMs have been explored in broader blockchain contexts \cite{DBLP:conf/icbc2/MakhdoomTZAL20,RaftvotingMechComp} and BFT protocols \cite{PoR_Gai,11164303}, their tailored adaptation to Raft's specific workflow, combined with cryptographic log binding, constitutes our novel contribution.
%
%
%

\section{Threat Model and Attack Analysis}
\label{Sec 3}

We consider an Internetware system where multiple replicas coordinate via Raft, and where nodes may exhibit \textit{observable Byzantine behaviors}.
Crucially, we do not claim to tolerate arbitrary, covert Byzantine faults (as in full BFT). Instead, we define a \textit{Byzantine-Resistant} fault model targeting a specific set of deviations that can be detected via anomaly monitoring or cryptographic verification.

\subsection{Defined Failure Set}
Our model handles the following attack classes:
\begin{enumerate}
\item \textbf{Forgery Attacks:} Nodes claiming artificially inflated $Term$ or $Index$ values during leader election \cite{RaftvotingMechComp}.
\item \textbf{Tampering Attacks:} Leaders altering the payload ($m$) of a log entry while preserving metadata integrity \cite{DBLP:journals/tosn/TianBSZG24}.
\item \textbf{On-Off Attacks:} Nodes alternating between honest and malicious behavior to evade detection \cite{DBLP:conf/blockchain2/CamiloRSD20}.
\end{enumerate}
\noindent\textbf{Observability and Reporting Assumption.}
All attacks in the defined failure set are assumed to be \textit{observable}: forgery manifests as anomalous term/index surges detectable by the election monitor; tampering is detected by Schnorr signature verification. We also assume that the reputation reporting subsystem receives inputs from multiple independent peers, and that a simple majority of reporting nodes for any metric is honest. This assumption prevents malicious nodes from indefinitely suppressing an honest node's reputation score. We explicitly do not handle attacks where a leader reorders transactions without altering content (equivocation), which is a limitation discussed in Section~\ref{sec:limitations}.

We assume the adversary cannot break standard cryptographic primitives (Schnorr signatures, hash functions) and that the network is partially synchronous.

\subsection{Forgery Attack Formalization}
A forgery attack occurs during Raft’s leader election phase. Let $ET_i \in [150\text{ms}, 300\text{ms}]$ be the election timeout of node $i$, randomly assigned. A malicious node $M$ sets $ET_M = 150\text{ms}$ (minimum). When the current leader fails, $M$ quickly transitions to candidate, increments its term to $Term_M^t = Term_M^{t-1} + \Delta$, where $\Delta$ is a forged large value (e.g., doubling the previous term), and broadcasts $\texttt{RequestVote}$ $(UID_M, Term_M^t, Index_M^t)$. Honest followers, upon receiving this message with a higher $Term$, are compelled to vote for $M$ (per Raft rules \cite{DBLP:conf/usenix/OngaroO14}), enabling $M$ to win the election despite lacking legitimate log progress. A simple Forgery Attack method is illustrated in \Cref{fig:signature_flow}.

\begin{figure}[htb]
\centering
\includegraphics[width=\linewidth]{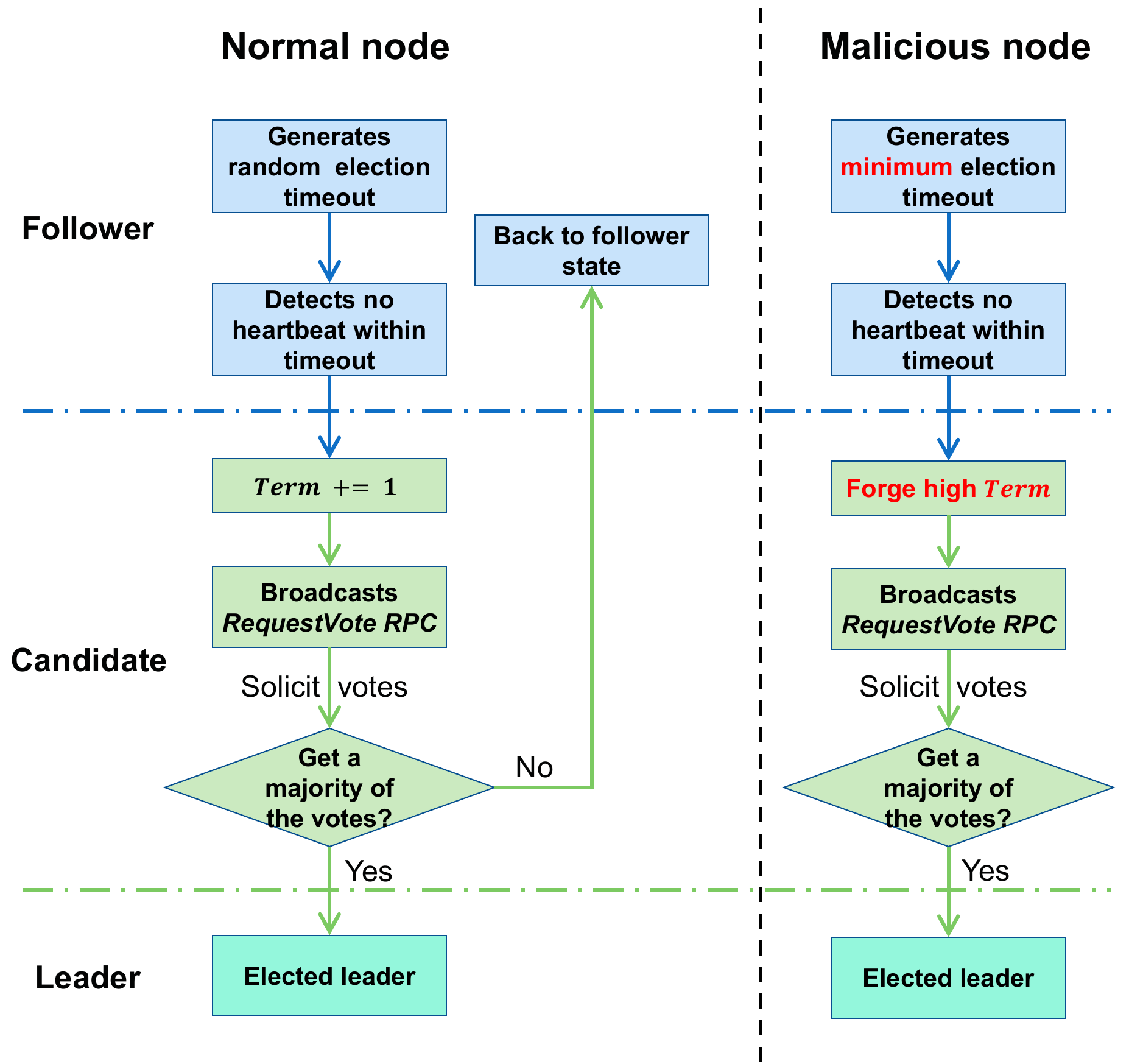}
\caption{A case of forgery attack.}
\label{fig:signature_flow}
\end{figure}

\subsection{Tampering Attack Formalization}
Once elected, leader $L$ receives a client request $m$, which it should append as a log entry $e = (Term, Index, m)$. A malicious $L$ can tamper $m$ to $m'$ while preserving $Term$ and $Index$. It then broadcasts $\texttt{AppendEntries}(Term, Index, m')$ to followers. Since Raft only checks $Term$ and $Index$ for consistency, followers accept the tampered entry, corrupting the replicated state machine.

\subsection{Additional Attack Vectors}
We also consider \textit{On-Off attacks}, where nodes alternate between honest and malicious behavior to evade detection \cite{DBLP:conf/blockchain2/CamiloRSD20}, and \textit{Sybil attacks}\cite{DBLP:conf/iptps/Douceur02}, where an adversary creates multiple identities to influence consensus. 
TRM-Raft mitigates these attacks through a combination of identity registration (Section~\ref{subsec:registration}) and the reputation penalty rules defined in Section~\ref{Sec 4}.


\section{B-TRM: A Lightweight Blockchain-Based TRM}
\label{Sec 4}

To enable Raft consensus to distinguish between honest and malicious nodes in a Byzantine environment, we design a TRM tailored for blockchain-based data-sharing platforms. Our model, named B-TRM (blockchain-based TRM for Real-World Scenarios), operates on-chain and quantifies node behavior through multi-dimensional assessment. Unlike prior TRMs \cite{ZHETU2022109404,RaftvotingMechComp,11164303} that focus solely on attack detection, B-TRM introduces adaptive penalty rules and a dynamic evaluation mechanism to differentiate between malicious attackers and occasional mistake-making users—a critical distinction in practical deployments.

\subsection{Model Formalization}
B-TRM is defined as a six-tuple:
\[
\text{B-TRM} = \langle U, B, F, W, P, \Phi \rangle
\]
where:
\begin{itemize}
\item $U = \{u_1, \dots, u_n\}$ is the set of users, each represented as $u_i = (uid_i, Rep_i)$ where $Rep_i \in [0,1]$ denotes reputation.
\item $B$ is the set of possible actions, categorized into \textit{success} ($s^x$) and \textit{failure} ($f^x$) types for three behavior classes $x \in \{$ \texttt{Upload}, \texttt{Modify}, \texttt{Loss}$\}$.
\item $F: B^* \rightarrow \mathbb{R}$ is the reputation evaluation function that maps a user’s action sequence to a reputation score.
\item $W$ is a dynamic adjustment function that sets the reputation evaluation interval based on $Rep_i$.
\item $P$ is a set of penalty rules that trigger temporary or permanent freezing of low-reputation nodes.
\item $\Phi = \{\phi^U, \phi^M, \phi^L\}$ are the three core behavioral metrics.
\end{itemize}

\subsection{Behavioral Metrics and Reputation Calculation}
B-TRM continuously monitors each node using three metrics derived from real-world data-sharing interactions:

\begin{itemize}
\item \textbf{Upload Quality ($\phi^U_i$)}: Measures the reliability of data uploaded by node $i$. Let $N_{Good}$ and $N_{Bad}$ be the counts of high- and low-quality uploads as perceived by other users:
\[
\phi^U_i = \frac{N_{Good} + 1}{N_{Good} + \theta \cdot N_{Bad} + 2}
\]

\item \textbf{Modification Integrity ($\phi^M_i$)}: Assesses whether node $i$ modifies others’ data appropriately. Let $N'_{Good}$ and $N'_{Bad}$ denote normal and malicious modifications:
\[
\phi^M_i = \frac{N'_{Good} + 1}{N'_{Good} + \theta \cdot N'_{Bad} + 2}
\]

\item \textbf{Packet Loss Rate ($\phi^L_i$)}: Captures network-level reliability, where $N_{Rec}$ and $N_{Send}$ are received and sent packets:
\[
\phi^L_i = \frac{N_{Rec}}{N_{Send}}
\]
\end{itemize}

The penalty factor $\theta > 1$ (default $\theta = 3$) amplifies the impact of malicious actions, making sporadic attacks more detectable. The additive smoothing terms (``+1” and ``+2”) provide a Bayesian prior to avoid extreme values when data is scarce.

In the context of a replicated state machine, ``Upload Quality ($\phi^U_i$)" refers to the quality of data submitted by a node to the blockchain (e.g., whether the data conforms to the expected format and schema). ``Packet Loss Rate ($\phi^L_i$)" refers to the rate of message loss between nodes (not raw network packets), assessing communication reliability.

Although these metrics are derived from peer reports, malicious nodes in a decentralized environment could submit false reports to manipulate another node’s reputation. To mitigate this, B-TRM requires that reputation updates be based on reports from multiple independent observers (typically $\geq 3$), and the system assumes that at least a simple majority of the reporting nodes for any metric are honest. While this assumption may be temporarily violated in targeted collusion, the Schnorr signature verification layer (Section~\ref{sec:sig_workflow}) provides an objective cryptographic ground truth for log tampering, acting as a hard backstop against report manipulation in that specific dimension.

\subsection{Indirect and Historical Reputation}
The \textit{direct reputation} $DR_i$ is computed as:
\[
DR_i = 
\begin{cases}
w_U \phi^U_i + w_M \phi^M_i + w_L \phi^L_i, & \min(\phi^U_i, \phi^M_i, \phi^L_i) \ge 0.5 \\
\min(\phi^U_i, \phi^M_i, \phi^L_i), & \text{otherwise}
\end{cases}
\]
where $w_U=0.5, w_M=0.3, w_L=0.2$ reflect the relative importance of each behavior. The second case penalizes \textit{discrimination attackers} who behave well in only some dimensions.

When direct interaction is lacking, nodes estimate reputation indirectly via trusted intermediates, which refer to nodes that have been verified to have high reputation scores (i.e., $Rep>0.8$). These nodes are considered trustworthy and can be used to estimate the reputation of other nodes. The reputation system ensures that these trusted intermediates are not malicious by requiring them to have a high reputation score.
\[
IR_{ij} = \frac{\sum_{k \in \mathcal{K}} DR_{ik} \cdot DR_{kj}}{\sum_{k \in \mathcal{K}} DR_{ik}}
\]
where $\mathcal{K}$ is a set of common neighbors (typically 3). Historical reputation $HR_i$ incorporates time-decayed past evaluations:
\[
HR_i = \frac{\sum_{p=t-2}^{t} e^{-(t-t_p)} \cdot Rep_i^{t_p}}{\sum_{p=t-2}^{t} e^{-(t-t_p)}}
\]
Only the three most recent evaluations are considered to balance accuracy and overhead. 

\subsection{Final Reputation and Penalty Rules}
The final reputation $Rep_i$ is a weighted combination of direct/indirect and historical reputation. Nodes with $Rep_i < 0.5$ are considered low-trust and subject to penalty rules:

\begin{itemize}
\item \textbf{Global Low-Reputation Rule (G-Rule)}: If a node’s reputation falls below 0.5 more than $m$ times within a time window, it is temporarily frozen; exceeding $n$ times leads to permanent removal.
\item \textbf{Consecutive Low-Reputation Rule (C-Rule)}: Detects sustained malice by checking consecutive low-reputation evaluations.
\item \textbf{High-Frequency Operation Rule (FC-Rule)}: Counts operations per minute to mitigate DoS attacks; exceeding thresholds triggers progressive freezing.
\end{itemize}

These rules are combined with priority (C-Rule $>$ G-Rule $>$ FC-Rule) to avoid over-penalization.

\subsection{Dynamic Evaluation Interval}
To reduce computational overhead, B-TRM adjusts the evaluation interval $W_i$ based on $Rep_i$:
\[
W_i = 
\begin{cases}
a \cdot Rep_i + b, & Rep_i \ge 0.5 \\
W_{\min}, & Rep_i < 0.5
\end{cases}
\]
where $W_{\min}$ and $W_{\max}$ are the minimum and maximum intervals (e.g., 2 and 30 minutes). High-reputation nodes are evaluated less frequently, saving resources while maintaining security for low-reputation nodes through frequent monitoring.

\subsection{Implementation and Deployment of Smart Contracts}
B-TRM is implemented as a set of Smart Contracts (SC) on the blockchain. 
Although we describe B-TRM as deployed on a blockchain fabric for transparency and immutability, the model can be implemented atop any append-only trusted log (e.g., a tamper-evident database), making it suitable for a broad class of Internetware coordination platforms without requiring a full blockchain infrastructure.
\begin{itemize}
\item \textbf{Deployment and Updates:} The SC is deployed by the network administrator during initialization. Updates to the SC code or parameters (e.g., $\theta$) require a multi-signature approval from a governance committee composed of high-reputation organizational peers.
\item \textbf{Access Control:} 
\begin{itemize}
	\item \textit{Reputation Calculation:} The SC's calculation function is internal; reputation scores are updated automatically based on `ReportBehavior` transactions submitted by peers.
	\item \textit{Record Access:} Reputation scores are public read-only for all registered peers to allow voting decisions. Modification of scores is restricted to the SC logic.
\end{itemize}
\item \textbf{Storage Overhead:} Reputation state is stored on the blockchain ledger. For $N$ nodes, storage cost is $O(N)$ with a small constant factor (approx. 64 bytes per node per evaluation cycle).
\end{itemize}
By being deployed on SC, B-TRM ensures transparency and immutability of reputation records. It provides the foundational trust layer upon which TRM-Raft’s security enhancements are built.


\section{The TRM-Raft Consensus Algorithm}
\label{Sec 5}
Building upon the B-TRM reputation model, we now present TRM-Raft, an enhanced Raft consensus algorithm that integrates trust-aware mechanisms to defend against Byzantine attacks while preserving the performance benefits of original Raft. TRM-Raft introduces three core innovations: (1) a \textbf{reputation-based election mechanism} to prevent forgery attacks, (2) a \textbf{leader restriction mechanism} using Schnorr signatures to detect and mitigate tampering attacks, and (3) a \textbf{registration verification mechanism} to deter Sybil attacks. The overall architecture is illustrated in \Cref{fig:trmraft_architecture}.

To the best of our knowledge, TRM-Raft is the first Raft variant to unify dynamic reputation-based election gating with Schnorr-signature-verified log integrity in a single, non-invasive framework. This combination is essential because neither mechanism alone can defend against both forgery and tampering: reputation prevents untrusted nodes from leading but does not stop a once-elected leader from altering log content; signatures detect tampering but do not prevent a malicious node with forged metadata from repeatedly winning elections. By co-designing the two, TRM-Raft forces an adversary to circumvent both a behavioral gate and a cryptographic check simultaneously.

\begin{figure}[htb]
\centering
\includegraphics[width=\linewidth]{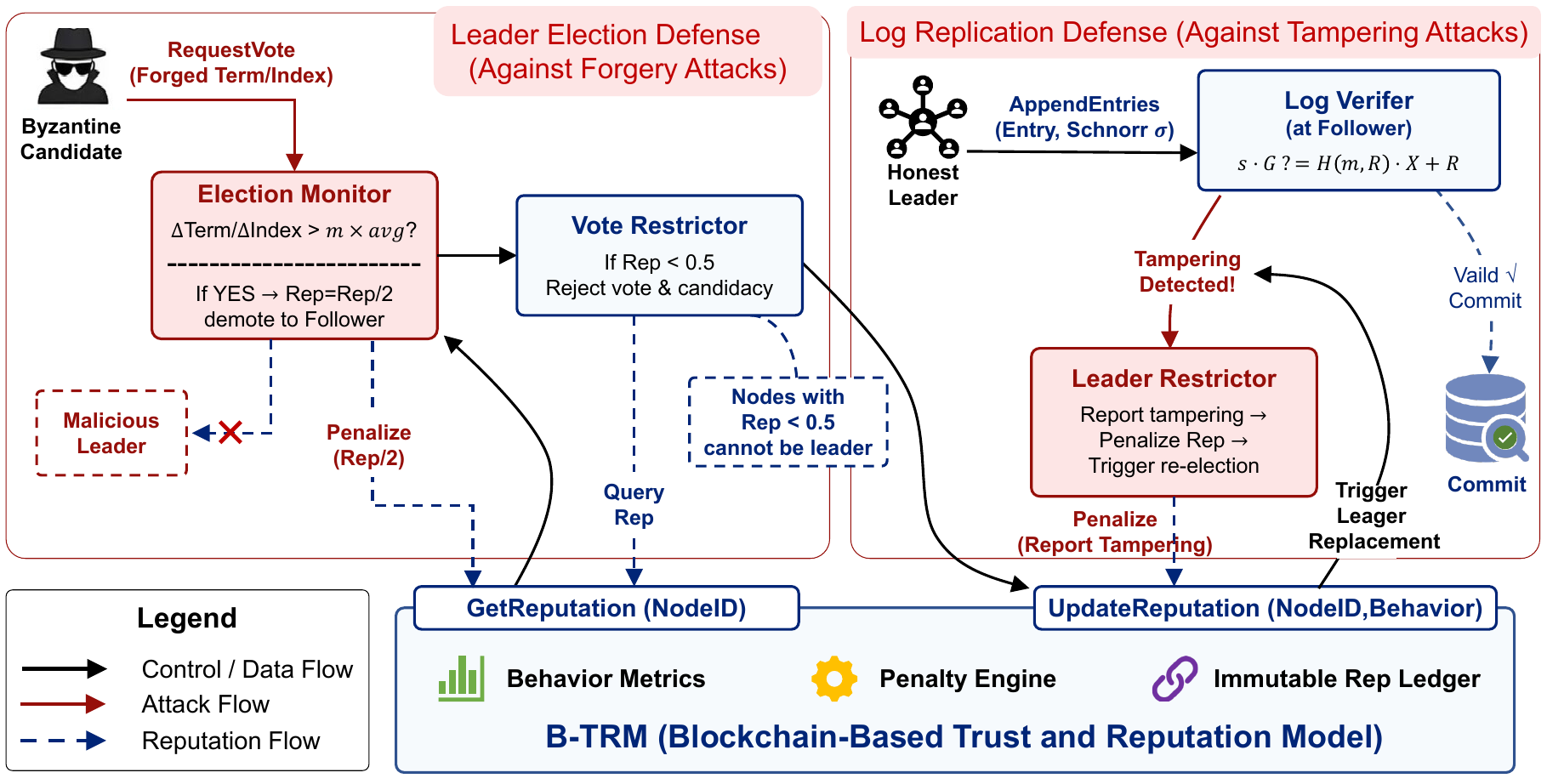}
\caption{TRM-Raft architecture: integration of B-TRM and cryptographic signatures into Raft’s election and replication phases.}
\label{fig:trmraft_architecture}
\end{figure}

\subsection{Reputation-Based Election Mechanism}
\label{subsec:election_mechanism}

TRM-Raft is designed to retrofit existing Raft-based Internetware systems with minimal disruption.
In standard Raft, a candidate with the highest term and log index wins the election. This allows a malicious node to forge these values and become leader. TRM-Raft addresses this by continuously monitoring term and index increments during elections.

\subsubsection{Monitoring and Anomaly Detection}
Each node maintains a local view of term and index increments across the cluster. When a candidate’s increment ($\Delta Term_i$ or $\Delta Index_i$) exceeds $m$ times the cluster average (where $m=2$ as determined experimentally), it is flagged as a potential forgery attempt. The monitoring logic is encapsulated in a \texttt{MonitorCandidate} smart contract method, which is invoked whenever a node becomes a candidate.

If an anomaly is detected:
\begin{itemize}
\item The candidate’s reputation is halved ($Rep_i \leftarrow Rep_i / 2$).
\item Its term and index are rolled back to previous values.
\item It is forced back to follower state, disqualifying it from the current election.
\end{itemize}
\textbf{Notes:~}The choice of $\Delta$ (the forged term increment) is not arbitrary; it is chosen to be significantly larger than the natural term increment observed in normal operation. In practice, the natural term increment is typically 1 (each election increments the term by 1). A malicious node would need to increment the term by a value much larger than 1 (e.g., doubling the current term) to win the election before honest nodes can react. A small increment (e.g., +2 ) would not be sufficient to win the election in a timely manner, as honest nodes would have time to react and broadcast their own higher terms.

\subsubsection{Vote Restriction}
Nodes with $Rep_i < 0.5$—whether due to forgery detection or other malicious behaviors—are excluded from the election process:
\begin{itemize}
\item They cannot vote.
\item They cannot be voted for (i.e., other nodes reject their \texttt{RequestVote} RPCs).
\end{itemize}
This ensures that only high-reputation nodes can become leaders. The modified voting logic is integrated into Raft’s \texttt{Step} function, as shown in Algorithm~\ref{alg:vote_restriction}.

\begin{algorithm}[htb]
\caption{Vote Restriction in TRM-Raft}
\label{alg:vote_restriction}
\begin{algorithmic}[1]
	\REQUIRE Candidate $c$, voter $v$, current term $T$, log index $I$
	\ENSURE Vote decision: \textbf{Accept}, \textbf{Reject}, or \textbf{Abstain}
	\STATE $rep_c \gets \text{B-TRM.GetReputation}(c)$
	\STATE $rep_v \gets \text{B-TRM.GetReputation}(v)$
	\IF{$rep_c < 0.5$}
	\RETURN \textbf{Reject} \Com{// Candidate is untrusted}
	\ENDIF
	\IF{$c.term > T \vee (c.term = T \wedge c.index > I)$}
	\IF{$rep_v < 0.5$}
	\RETURN \textbf{Abstain} \Com{// Voter is untrusted, vote does not count}
	\ELSE
	\RETURN \textbf{Accept}
	\ENDIF
	\ELSE
	\RETURN \textbf{Reject} \Com{// Candidate does not have latest log}
	\ENDIF
\end{algorithmic}
\end{algorithm}

\subsection{Schnorr Signature Workflow}
\label{sec:sig_workflow}
The integrity of log entries is protected by client-side Schnorr signatures verified by all followers. The workflow is as follows:
\begin{enumerate}
\item \textbf{Client Signing:} A client creates a transaction $m$, computes a Schnorr signature $\sigma = \mathsf{Sign}(sk_c, m)$ using its private key $sk_c$, and sends $(m, \sigma, pk_c)$ to the current leader, where $pk_c$ is the client's registered public key.
\item \textbf{Leader Propagation:} The leader wraps the request into a log entry $e = (term, index, m, \sigma, pk_c)$ and broadcasts it via \texttt{AppendEntries}.
\item \textbf{Follower Verification:} Each follower verifies $\mathsf{Verify}(pk_c, m, \sigma)$. If verification fails, the follower immediately reports a tampering incident to the B-TRM smart contract and refuses to commit the entry. Repeated failures lead to reputation decay and leader replacement.
\end{enumerate}
A malicious leader cannot forge a valid signature for a modified payload $m'$ because it does not possess $sk_c$. It also cannot replace $pk_c$ with its own key, as that would not match the client's registered identity and would be rejected by followers. Thus, any undetected tampering would require breaking the Discrete Logarithm Problem (DLP) , which is computationally infeasible.

\subsubsection{Security and Efficiency of Schnorr Signatures}
Schnorr signatures are provably secure under the DLP in the random oracle model. For the secp256k1 curve with 256-bit security, breaking DLP requires approximately $2^{128}$ operations using Pollard's rho algorithm. Signature generation and verification involve only scalar multiplications and hash operations, adding minimal overhead (under $2$\,ms per operation as measured in our experiments).

\subsection{Registration Verification Mechanism}
\label{subsec:registration}

To prevent Sybil attacks, TRM-Raft implements a strict registration process:
\begin{itemize}
\item \textbf{Identity Binding}: Each new node must provide verifiable real-world credentials (organizational certificates, KYC documents).
\item \textbf{One-Identity Rule}: Blockchain maintains a registry to prevent single entities from registering multiple identities.
\item \textbf{Initial Reputation}: New nodes start with $Rep = 0.5$ (neutral) and require positive behavior history to gain voting privileges.
\item \textbf{Bootstrapping}: Initial network formation requires offline verification by trusted authorities.
\end{itemize}

\subsection{Security Analysis}
\label{subsec:security_analysis}

TRM-Raft elevates Raft's security from Crash Fault Tolerance to \textit{Byzantine resistance} while preserving its performance profile. We analyze its guarantees under a practical threat model that assumes observable deviations.

\subsubsection{Threat Model Assumptions}
\begin{itemize}
\item \textbf{Adversarial Power}: The adversary controls at most $f$ out of $n$ nodes and can make them execute any behavior from the defined failure set (forgery, tampering, on-off, Sybil).
\item \textbf{Observability}: Any attack in the defined failure set is eventually detectable by honest nodes via anomaly monitoring (term/index surges) or cryptographic verification (Schnorr signatures).
\item \textbf{Network}: Partially synchronous with eventual message delivery.
\item \textbf{Cryptography}: Schnorr signatures and hash functions are secure; private keys are not compromised.
\item \textbf{Initial Honest Majority}: At least $\lfloor n/2 \rfloor + 1$ nodes are honest during network bootstrap.
\item \textbf{Honest Reporting Threshold}: Reputation updates rely on reports from multiple independent nodes. We assume that a simple majority of reporting nodes for any metric is honest, ensuring that malicious reports cannot indefinitely suppress an honest node's score.
\end{itemize}

\subsubsection{Safety Guarantees}
\begin{theorem}[Leader Uniqueness]
At most one leader can be elected per term in TRM-Raft.
\end{theorem}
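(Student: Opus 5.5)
The plan is to reduce leader uniqueness to the standard Raft quorum-intersection argument, and then check that the TRM-Raft modifications (Algorithm~\ref{alg:vote_restriction}, the anomaly rollback, and the reputation gating) can only remove votes, never add them. First I fix the quorum notion. A candidate becomes leader for term $T$ only after collecting \textbf{Accept} votes from at least $\lfloor n/2\rfloor+1$ distinct nodes, counted over the full membership $n$. Abstentions from nodes with $Rep<0.5$ do not count toward this quorum. Lowering reputations therefore shrinks the set of admissible voters without lowering the threshold, and any two quorums for the same term intersect in at least one node.

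The second step is the one-vote-per-term invariant. I argue it holds for every honest voter: as in Raft~\cite{DBLP:conf/usenix/OngaroO14}, an honest node persists \texttt{votedFor} per term and returns \textbf{Accept} at most once per term. Algorithm~\ref{alg:vote_restriction} only adds rejection and abstention branches before the usual log-freshness check, so it never lets a node accept twice in one term. The rollback on anomaly detection resets the \emph{candidate's} term and index, not any voter's record. It therefore cannot make an honest voter forget a vote already cast in term $T$.

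The main obstacle is the Byzantine voters. A faulty node can equivocate and send \textbf{Accept} to two candidates in the same term, so quorum intersection alone yields only a common node, which may be faulty. I would handle this in two ways. First, I would invoke the Initial Honest Majority and Honest Reporting Threshold assumptions to argue that honest nodes keep $Rep\ge0.5$, so they always remain eligible voters. Second, I would require that a winning quorum contain a majority of honest nodes, i.e.\ more than $n/2$ honest \emph{eligible} voters in total. Under that requirement, two quorums of size $\lfloor n/2\rfloor+1$ drawn from $n$ nodes of which $f$ are faulty must share an honest node whenever $2(\lfloor n/2\rfloor+1)-n > f$. This condition is the weak point: with plain majority quorums it yields only $f=0$ in general.

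I therefore expect the honest proof either to need an added assumption or to count votes as signed, reputation-checked messages. Under the latter reading, an equivocating vote is itself an observable deviation, by the Observability assumption. The double vote is then detected, the voter's reputation falls below $0.5$, and its vote is excluded, so that effectively only honest voters contribute. With that in place, the shared node of the two quorums is honest, has voted for both candidates in term $T$, and this contradicts the second step. That contradiction establishes at most one leader per term.
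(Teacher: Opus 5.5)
Your first two steps, made explicit, are the paper's entire proof. The paper notes that the $Rep\ge 0.5$ gate only removes voters and candidates, asserts that Raft's majority voting and term monotonicity are untouched, and concludes that Raft's uniqueness carries over. Your fixed threshold $\lfloor n/2\rfloor+1$ over the full membership even makes its appeal to consistent reputation views unnecessary.

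That argument tacitly assumes every counted voter grants at most one \textbf{Accept} per term. You isolate exactly this crux, but your repair does not close it. The paper's Observability assumption is tied to term/index monitoring and to Schnorr checks on log entries, and nothing in TRM-Raft observes votes, so detecting equivocation is an added hypothesis. Even granting it, Observability only promises \emph{eventual} detection, with reputation changing through \texttt{ReportBehavior} transactions and evaluation intervals, whereas the gate reads $Rep$ at voting time.

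Take $n=5$ with node $5$ Byzantine but still at $Rep\ge 0.5$ (say an On-Off attacker in its honest phase). Candidates $1$ and $3$ vote for themselves, node $2$ votes for $1$, node $4$ for $3$, and node $5$ signs \textbf{Accept} for both. Each candidate reaches $3=\lfloor 5/2\rfloor+1$ votes in term $T$ before any honest node has seen both signed votes, so nobody can flag node $5$ before two leaders exist. Signatures buy after-the-fact accountability, not prevention. Prevention needs any two quorums to meet in an honest node, i.e.\ quorum size $>(n+f)/2$, which majority quorums do not give (already for $n=5$, $f=1$). Hence ``effectively only honest voters contribute'' fails at the moment that matters, and your closing contradiction does not follow. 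What is provable is either the theorem under the explicit hypothesis that every node with $Rep\ge 0.5$ at voting time grants at most one vote per term (the paper's implicit assumption), or a weaker statement bounding how many terms an equivocator can spoil before it is excluded.

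Separately, your claim that the anomaly rollback touches no voter's record overlooks that the flagged candidate is itself a voter, and the paper reports that honest nodes are sometimes flagged at $m=2$. Rolling an honest node back from $T+1$ to $T$ lowers its term, so ``term monotonicity'' is not literally intact. Unless its \texttt{votedFor} for $T$ is restored as well, it can vote twice in $T$.

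Under Raft's usual vote rule, take $n=3$:
\begin{enumerate}
\item Node $1$, unable to reach node $3$, wins term $T$ with node $2$'s vote and is then cut off from node $2$ before replicating anything.
\item Node $2$ times out, becomes candidate at $T+1$, is falsely flagged, and is rolled back to $T$ with a cleared vote.
\item Node $3$, still at $T-1$, starts an election for $T$ and wins with node $2$'s vote.
\end{enumerate}
This gives two leaders in $T$ with no Byzantine node at all. Your second step therefore needs the explicit requirement that the rollback restores $(\mathit{currentTerm},\mathit{votedFor})$ together.
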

\begin{proof}
TRM-Raft preserves Raft's original election constraints while adding reputation checks. A node must hold $Rep \ge 0.5$ to vote or be elected. Since reputation scores are consistent with the blockchain state and Raft's majority voting and term monotonicity remain intact, the original leader uniqueness property holds.
\end{proof}

\begin{theorem}[Log Content Integrity]
If two logs contain an entry with the same term and index, they store identical commands.
\end{theorem}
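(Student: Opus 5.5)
The plan is to reduce the statement to Raft's original Log Matching property, using the previous Leader Uniqueness theorem together with the Schnorr signature workflow of Section~\ref{sec:sig_workflow}. The one thing that changes relative to vanilla Raft is that the leader may be Byzantine, so a leader could try to give different followers different payloads under the same $(term,index)$. We must show that no such conflicting entry can be stored, beyond what Raft already rules out.

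\textbf{Step 1: one originating leader per slot.} By the Leader Uniqueness theorem, for a fixed term $t$ at most one node $L$ is leader. In Raft an entry with term $t$ is created only by the leader of term $t$. Followers accept \texttt{AppendEntries} only from the leader of the current term, and the reputation gate does not add any other way to inject entries. Hence every entry with term $t$ in any honest log originates from $L$.

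\textbf{Step 2: every stored payload was signed by a client.} Take the two entries $e=(t,i,m,\sigma,pk_c)$ and $e'=(t,i,m',\sigma',pk_{c'})$. An honest follower stores and commits an entry only if $\mathsf{Verify}(pk_c,m,\sigma)$ holds and $pk_c$ matches a registered client identity (Section~\ref{subsec:registration}). Under the cryptographic assumption (EUF-CMA security of Schnorr, reducing to DLP), $L$ cannot produce a valid $(m,\sigma)$ that the client never signed, except with negligible probability. Therefore both $m$ and $m'$ are genuine client-submitted commands.

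\textbf{Step 3: the Byzantine leader cannot put two different commands in one slot.} This is the main obstacle. Signatures alone do not stop $L$ from placing two different valid client requests at the same index $i$ for different followers; that is equivocation, and the threat model explicitly excludes it. I would handle this in one of two ways:
\begin{itemize}
\item Interpret ``logs'' as the logs of honest nodes for \emph{committed} entries. Commitment requires a majority quorum, and any two majorities intersect in at least one honest node, given the honest-majority and Honest Reporting Threshold assumptions. An honest node accepts at most one entry per $(t,i)$ and never overwrites a committed one, so committed entries at $(t,i)$ are unique.
\item Alternatively, strengthen the client signature to cover $(term,index,m)$. I would note that this changes the workflow, so the first route is preferable.
\end{itemize}
Given uniqueness, $m=m'$ follows from the signatures binding content.

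\textbf{Step 4: earlier entries.} The ``identical preceding entries'' half follows from Raft's unchanged consistency check on \texttt{AppendEntries}, by the usual induction on the index. Combined with Step 3, the entry at $(t,i)$ carries the same command in both logs, up to negligible signature-forgery probability. I expect the delicate point to be stating precisely that the guarantee holds for committed entries at honest nodes, since a tampered or equivocated uncommitted entry is only detected and rejected, not prevented.
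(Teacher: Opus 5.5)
You are right that signatures alone do not pin down which command occupies a slot. The gap is in how you close that hole in Step~3.

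\textbf{Why the first route fails.} Two Raft majority quorums need not share an honest node. The leader counts itself in every quorum it forms, and two quorums of size $\lfloor n/2\rfloor+1$ can overlap in a single node. Take $n=3$ with a Byzantine (but not yet penalized) leader $L$ and honest $H_1,H_2$: the quorums $\{L,H_1\}$ and $\{L,H_2\}$ meet only in $L$. Guaranteeing an honest node in every intersection requires quorums larger than $(n+f)/2$, i.e.\ BFT quorums. The Honest Reporting Threshold concerns reputation reports, not replication, so it does not help.

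Worse, Raft followers never observe a quorum at all, because the commit index is dictated by the leader in \texttt{AppendEntries}. So $L$ can send the genuine request $(t,i,m_A,\sigma_A)$ to $H_1$ and the genuine request $(t,i,m_B,\sigma_B)$ to $H_2$, each with commit index $i$. Both signatures verify, no term/index surge occurs, and no reputation penalty fires. Two honest logs then hold different commands at $(t,i)$. The fact that an honest node accepts at most one entry per $(t,i)$ is irrelevant, since the conflicting entries land on different nodes.

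\textbf{Why the second route fails.} Clients sign before the leader assigns an index. Even signatures over $(t,i,m)$ would not stop the leader from promising the same slot to two clients.

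\textbf{What is missing.} Without a no-equivocation assumption the statement is simply false, so no argument from signatures plus majority quorums can prove it. The paper's proof is just your Step~2: followers check $s\cdot G = H(m,R)\cdot X + R$, so any substitution $m\to m'$ is rejected. It implicitly relies on the threat model, in which the adversary can only execute behaviors from the defined failure set, and content-preserving reordering/equivocation is explicitly excluded from that set. Under that restriction, the only content-affecting deviation open to a leader is tampering with a request's payload, which the signature check blocks. Otherwise the leader behaves as in Raft, and the standard Log Matching reasoning (your Steps~1 and~4) goes through. You noticed the exclusion yourself; invoke it explicitly as the source of slot uniqueness, rather than trying to derive uniqueness from majority intersection, which cannot work against a Byzantine leader.
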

\begin{proof}
Each command $m$ is accompanied by a Schnorr signature $\sigma$ generated by the client. Followers verify $s \cdot G = H(m, R) \cdot X + R$ before committing. Any modification $m \rightarrow m'$ invalidates the signature, and the entry is rejected. Therefore, conflicting log contents cannot be committed.
\end{proof}
\textit{Note}: This guarantee prevents content tampering but does not prevent a malicious leader from reordering entries without altering their content. As discussed in Section~\ref{sec:limitations}, addressing equivocation requires complementary accountability mechanisms.

\subsubsection{Liveness Guarantees}
\begin{theorem}[Eventual Leader Election]
If honest nodes with $Rep \geq 0.5$ form a majority and the defined attacks remain observable, TRM-Raft eventually elects a leader and makes progress.
\end{theorem}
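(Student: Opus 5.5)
The plan is to reduce the statement to Raft's liveness argument under partial synchrony. The work lies in checking that the new gating rules (Algorithm~\ref{alg:vote_restriction}, the \texttt{MonitorCandidate} rollback, and the Schnorr-triggered leader replacement) cannot permanently block an honest high-reputation candidate. Let $H$ denote the set of honest nodes with $Rep \ge 0.5$; by hypothesis $|H| \ge \lfloor n/2 \rfloor + 1$.

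\textbf{Step 1: honest reputations stay above the threshold.} I first show that every node in $H$ keeps $Rep \ge 0.5$ from some point on. An honest node never inflates its term beyond the natural increment of $1$, so it never triggers the $m=2$ anomaly test. It never signs tampered entries, since followers' verification of client signatures succeeds on untampered payloads. Under the Honest Reporting Threshold, a majority of reports about its $\phi^U,\phi^M,\phi^L$ are honest, so its direct reputation, and hence $Rep$, cannot be driven below $0.5$ indefinitely. The historical term $HR$ averages only the last three evaluations, so any temporary dip is transient. Consequently no node in $H$ is permanently frozen by the G-Rule or C-Rule.

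\textbf{Step 2: faulty leaders are eventually removed.} Consider any leader that commits a forgery or tampering attack. By the Observability assumption the attack is eventually detected, through the term/index monitor or through a failed $\mathsf{Verify}$. Detection halves the attacker's reputation and causes its proposals to be refused. Repeated violations push its $Rep$ below $0.5$; after that its \texttt{RequestVote} RPCs are rejected and its votes abstain, so it is excluded. On-Off attackers are handled in the same way: each ``off'' phase is observable and halves reputation, while recovery is slow. The $G$/$C$-rules therefore bound how many times such a node can re-enter. Hence, after finitely many detections, no Byzantine node can win an election or keep a leadership that fails to make progress. Its heartbeats stop being honored, and honest followers time out.

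\textbf{Step 3: an honest leader is eventually elected.} After GST, message delays are bounded. With randomized timeouts $ET_i \in [150,300]$ms, with positive probability some $h \in H$ times out first and solicits votes from all of $H$ within the election window. Step~1 guarantees that $h$ passes the $rep_c \ge 0.5$ check and that voters in $H$ are not abstaining. Raft's up-to-date log rule holds for at least one honest node with the maximal committed log among $H$. A candidate rejected by the log rule still advances its term, and a more up-to-date node in $H$ wins subsequently, exactly as in Raft. The rollback rule prevents forged terms from keeping honest nodes permanently behind. Thus some $h$ obtains at least $\lfloor n/2\rfloor+1$ accepting votes with probability $1$ over repeated rounds. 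Once it is elected, its entries carry valid client signatures, so followers in $H$ commit them and the log advances.

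\textbf{Main obstacle.} The hardest part is Step~2's interaction with the election window. The difficulty is bounding the number of times a Byzantine node can win leadership before its reputation falls below $0.5$, given that reputation is recomputed only at intervals $W_i$. A node whose elevated $Rep$ grants long intervals might win several elections between evaluations. I would handle this by noting that the halving on forgery detection is applied immediately by \texttt{MonitorCandidate}, not at the next $W_i$ boundary. Since $Rep \le 1$, at most one halving suffices to bring a node from $Rep<1$ below $0.5$. The number of successful malicious leaderships is therefore finite, and this is the fairness condition the argument needs.
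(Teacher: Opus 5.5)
Your skeleton matches the paper's: penalties push attackers below $0.5$, the honest majority stays eligible, and randomized timeouts then yield an honest leader. However, two of the justifications you add do not hold up.

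First, Step~1 rests on the claim that an honest node never trips the $m=2$ monitor. The paper itself treats false positives at $m=2$ as real: its tuning experiment has ``over 90\%'' of normal nodes winning, not all, and the limitations section warns about false positives. An honest Raft node's term can also legitimately run ahead of the cluster average after repeated split votes, or after repeated timeouts while partitioned before GST. That is exactly what an $m$-times-average test flags. By your own last paragraph, one halving sends any node with $Rep<1$ below $0.5$, so a single false flag already removes an honest node from $H$. ``Every dip is transient'' does not shield it from the G-Rule either, because the G-Rule counts how many times $Rep$ drops below $0.5$, not how long it stays there. So ``no node in $H$ is permanently frozen'' is unsupported. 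The paper never derives this; it uses the hypothesis as a standing assumption (``the honest majority maintains high reputation''), and you should do the same.

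Second, your resolution of the main obstacle is a non sequitur. Driving an attacker below $0.5$ once does not make its leaderships finite. The halving is applied outside $F$, a forgery feeds none of $\phi^U,\phi^M,\phi^L$, and $HR$ forgets after three evaluations---the very transience you relied on in Step~1---so nothing in B-TRM keeps the score down. Moreover, a \emph{detected} forgery never yields a leadership at all, since the candidate is rolled back and forced to follower. So halving on forgery is not what limits successful malicious leaderships. Only the G/C-Rules could bound re-entry, and as stated the G-Rule counts within a time window, so a paced attacker need never be removed. This is consistent with the paper calling its resilience bound ``probabilistic and time-dependent.''

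This matters because Step~3 needs the attackers gone: while an attacker pinned at $ET_M=150$\,ms remains eligible, an honest node essentially never times out first. Liveness does not actually need finiteness, though. A Byzantine leader that does not tamper commits validly signed entries, so progress is made anyway; one that tampers is caught at once by Schnorr verification and replaced. What you do need is that a penalized node stays below $0.5$ long enough that, at some election, all attackers still willing to tamper are ineligible at once, so your Step~3 argument over $H$ applies. State this as an explicit assumption relating re-evaluation (or G/C-Rule freezing) to election timescales, since the paper's ``making it ineligible'' tacitly assumes the same thing.
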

\begin{proof}
A malicious node attempting forgery or tampering triggers a reputation penalty. After a bounded number of evaluation intervals (or immediately for cryptographic violations), its score falls below $0.5$, making it ineligible. Because the honest majority maintains high reputation and randomized timeouts drive new election attempts, a suitable candidate will eventually win. The system thus guarantees liveness after a transient period.
\end{proof}

\subsubsection{Practical Byzantine Resilience}
\begin{theorem}[Practical Resilience]
TRM-Raft limits the long-term fraction of leaders that are malicious to a configuration-specific small value, as long as the defined attacks remain observable and honest nodes with $Rep \geq 0.5$ outnumber adversarial nodes.
\end{theorem}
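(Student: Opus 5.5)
The plan is to bound the long-run fraction of malicious leaders by a renewal-type argument over election epochs, reusing the mechanisms behind the Eventual Leader Election theorem. Call the interval between two consecutive leader changes an \emph{epoch}. Let $\rho_t$ be the fraction of epochs up to time $t$ whose leader is adversarial. The goal is to show $\limsup_{t\to\infty}\rho_t \le \epsilon$, where $\epsilon$ is computed from the configuration parameters: the monitor threshold $m=2$, the halving penalty, the freezing parameters $m,n$ of the G-Rule, the evaluation interval $W_{\min}$, and the ratio $f/(n-f)$.

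\textbf{Classifying malicious leaders.} First I split every adversarial leadership into two cases according to how the leader was obtained.

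\emph{Case (a): forgery.} The node won through forged $Term$/$Index$ values. Any observable surge exceeding $m$ times the cluster average is caught by \texttt{MonitorCandidate}, which halves $Rep$ and rolls the node back. Its candidacy therefore fails, except in the undetected regime where the increment stays below $m$ times the average. I would argue that such a candidate gains no advantage over honest candidates beyond the timeout bias $ET_M=150$\,ms.

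\emph{Case (b): legitimate election, then tampering or an On-Off strategy.} Every tampering event is detected immediately by Schnorr verification, using the Log Content Integrity theorem. This produces a reputation report, and the honest-reporting majority assumption guarantees that the report is applied.

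In both cases, each malicious tenure costs the node a multiplicative reputation loss of at least a factor $c<1$ (halving, or the $\theta$-amplified drop in $\phi^U,\phi^M$). Regaining $Rep\ge 0.5$ requires a bounded-below number of honest evaluation intervals. The G-Rule then caps the number of sub-$0.5$ excursions at $n$ before permanent removal.

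\textbf{Counting.} Each adversarial node can hold malicious leadership at most $K$ times, where $K$ depends on $n$ (G-Rule) and $c$, before it is frozen or removed. After that, by the vote restriction in Algorithm~\ref{alg:vote_restriction}, it can never win again. With $f$ adversarial nodes, the total number of malicious-leader epochs is at most $fK$ plus the number of \emph{honest-behaving} epochs led by adversarial nodes. The latter are harmless and are excluded from the count, since a leader that only follows the protocol is not a malicious leadership event.

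Meanwhile, the Eventual Leader Election theorem gives infinitely many epochs led by honest nodes, because honest nodes with $Rep\ge 0.5$ form a majority. Hence $\rho_t\to 0$ in the fully observable limit. For a finite horizon, the bound is $\rho_t\le fK/E_t$, where $E_t$ is the number of epochs up to time $t$. To obtain a steady-state constant when nodes may be re-admitted after temporary freezing, I would instead bound the per-epoch win probability of an eligible adversary. Using randomized timeouts, this probability is at most the adversary's share $p$ of eligible candidates. Multiplying by the fraction of time it is eligible gives $\epsilon \approx p\cdot T_{\text{eligible}}/(T_{\text{eligible}}+T_{\text{frozen}})$.

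\textbf{Main obstacle.} The On-Off strategy is the difficult step. The adversary behaves honestly long enough for $HR$ and the Bayesian smoothing in $\phi$ to restore $Rep\ge 0.5$, then attacks once per cycle. Bounding this requires showing that the G-Rule window counts excursions cumulatively, so that recovery cannot be repeated indefinitely.

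My first attempt would be to rely on the permanent-removal clause, which makes $K$ finite. If the window resets, I would fall back to a duty-cycle bound: each attack costs a recovery time that is lower-bounded through $W_{\min}$, the three-sample $HR$ decay, and the $\theta$-weighting. This gives an explicit configuration-specific $\epsilon$, which I would then compare against the empirical $<5\%$ figure.
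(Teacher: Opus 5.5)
Your skeleton (every observable deviation is penalized, so an adversary can lead only a bounded number of times before it becomes ineligible) is exactly the paper's justification. The paper stops at that qualitative level. It takes the ``small value'' from the experiment ($<5\%$ at $m=2$, $\theta=3$, $f=0.4n$) and explicitly calls the bound probabilistic and time-dependent rather than absolute. The steps you add to extract an explicit $\epsilon$, let alone $\rho_t\to 0$, do not follow from the mechanisms as specified.

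\textbf{No uniform factor $c<1$, so no finite $K$.} The halving is applied by \texttt{MonitorCandidate} at candidacy, and the flagged candidate is rolled back and forced to follower. So a halving never accompanies a won election. Under the observability hypothesis your case (a) contributes no tenures, and outside it the winning forgeries are exactly the unpenalized ones. The $\theta$-amplified drop you cite for tampering does not supply $c$ either. One bad event multiplies $\phi^M$ by $1-\theta/(N'_{Good}+\theta(N'_{Bad}+1)+2)$, which tends to $1$ as $N'_{Good}$ grows. A node that first builds honest history can therefore absorb many penalized tenures while keeping $Rep\ge 0.5$. The only source for an immediate drop below $0.5$ is the bare assertion in the liveness proof. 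Even that gives one excursion per tenure, not a finite $K$. $HR$ keeps only three samples, so nothing in $F$ stops $Rep$ from climbing back above $0.5$ within a few $W_{\min}$ intervals. The G-Rule also counts excursions within a window. You flagged the window issue yourself, and your fallback for it is where the second problem lies.

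\textbf{The duty-cycle fallback contradicts the threat model.} The adversary fixes $ET_M=150$\,ms rather than drawing it from $[150,300]$\,ms. Whenever it is eligible and keeps its increments below $m$ times the cluster average, it is nearly always the first candidate and wins with probability close to $1$, not with probability $p$. Calling this ``no advantage beyond the timeout bias'' discards exactly the advantage that wins elections. Nothing in TRM-Raft monitors timeouts, so the observability hypothesis does not remove it. Even granting $p$, at $f=0.4n$ it is around $0.4$. A small $\epsilon$ would then need $T_{\text{frozen}}\gg T_{\text{eligible}}$, which the rules do not supply, by the previous paragraph.

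\textbf{You change the quantity being bounded.} By discarding ``honest-behaving epochs led by adversarial nodes'' you bound tenures with a detected deviation. The theorem and the measured malicious-leader ratio instead count leaders that are adversarial nodes. An On-Off attacker that wins by timeout bias and attacks only occasionally spends most of its tenures in the class you discard.

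Closing the argument needs assumptions the paper does not make: a cumulative G-Rule, a per-event penalty bounded away from $1$, and enforced timeouts. Without them, the defensible statement is the paper's qualitative decline argument together with an empirically calibrated bound.
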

\begin{proof}[Justification]
Every forgery or tampering action is observable and leads to a deterministic reputation penalty (e.g., halving). Malicious nodes therefore face a declining reputation trajectory; they may win an election at most a limited number of times before becoming ineligible. In our experimental setting with $m=2$ and $\theta=3$, the malicious leader ratio remained below 5\% even at $f=0.4n$. This bound is probabilistic and time-dependent rather than absolute, as a previously undetected attacker can execute a single harmful act before its reputation decays. For strict adversarial environments, a full BFT protocol remains necessary, but for the target Internetware applications TRM-Raft offers a pragmatic, low-overhead defense.
\end{proof}

\subsubsection{Why Combination of Reputation and Signatures is Necessary}
The defense against forgery and tampering cannot be achieved by either mechanism alone. Reputation-based election prevents untrusted nodes from becoming leaders, but a malicious node that maintains a temporarily high reputation could still tamper with log content after being elected. Conversely, Schnorr signatures detect log tampering instantly, but they cannot stop a malicious node with forged metadata from repeatedly winning elections. By coupling the two, TRM-Raft forces an adversary to simultaneously defeat a behavioral gate and a cryptographic verification step. This co-design is the primary reason for the system's resilience under mixed attacks.

\subsubsection{Attack Resistance Summary}
\begin{itemize}
\item \textbf{Forgery Attacks}: Anomaly monitoring detects inflated term/index; reputation is halved; low-reputation nodes are excluded from voting and candidacy.
\item \textbf{Tampering Attacks}: Schnorr signature verification by followers; detected tampering incurs reputation penalty and triggers leader replacement.
\item \textbf{On-Off Attacks}: The minimum-of-metrics reputation rule prevents malicious nodes from maintaining a high score through selective good behavior.
\item \textbf{Sybil Attacks}: Mitigated through identity registration and one-entity-per-identity policy.
\item \textbf{Network-Level Attacks}: Packet loss metric $\phi^L$ captures communication disruption and penalizes nodes accordingly.
\end{itemize}

\subsection{Performance-Security Tradeoff}
TRM-Raft maintains Raft's $O(n)$ message complexity while adding:
\begin{itemize}
\item $O(1)$ reputation checks per vote
\item $O(k)$ signature verifications per log entry (for a committee of size $k$)
\item $O(n)$ reputation updates per monitoring interval
\end{itemize}
The overhead is bounded and configurable, enabling operators to tune the tradeoff according to deployment requirements.


\section{Experimental Evaluation}
\label{Sec 6}
We evaluated TRM-Raft in the context of a representative Internetware deployment: a Hyperledger Fabric 2.5 testbed configured as per enterprise patterns, and conducted comprehensive experiments to evaluate its effectiveness against Byzantine attacks and its performance overhead. Our experiments were designed to answer three key questions: 
\begin{enumerate}[label = (\arabic*)]
\item How effective are the individual mechanisms (reputation-based election and leader restriction) against targeted attacks?
\item What is the overall attack resistance of TRM-Raft under mixed attack scenarios?
\item What is the performance overhead introduced by these security enhancements?
\end{enumerate}

\subsection{Experimental Setup}
\noindent\textbf{Implementation Platform.}
TRM-Raft and all Raft-based baselines (vanilla Raft, RB-Raft, VSSB-Raft, SRaft, etc.) are implemented within Hyperledger Fabric 2.5, using Go 1.18.3 and Fabric's native Raft library. The testbed consists of 4 organizations, each with 1 Certificate Authority (CA), 15 orderer nodes, and 50 peer nodes, running on VMware virtual machines (Ubuntu 20.04, Intel i7-10750H 2.6GHz, 16GB RAM). Malicious nodes were programmed to execute specific attack patterns: forgery attacks (randomly forging term/index values), tampering attacks (modifying 30\% of client requests), On-Off attacks (90\% normal, 10\% malicious behavior), and discrimination attacks (malicious only in data modification).

For broader performance context, we additionally benchmarked PBFT, PoW, and DPoS using the BFT-SMaRt framework~\footnote{https://github.com/bft-smart/library.git} under the same workload generator. These protocols are not part of the Fabric integration; all Raft-specific experiments were conducted exclusively on the Fabric testbed.

\textbf{Notes: }
Unless otherwise stated, all experiments use Fabric's default batch size of 10 transactions and a block timeout of 2\,s. The reputation evaluation interval is set to $W_{\min}=2$\,min for low-reputation nodes and up to $W_{\max}=30$\,min for high-reputation nodes. The Schnorr signature implementation uses the secp256k1 curve with the default 256-bit security level provided by the Go `crypto/ecdsa` library.

\subsection{Parameter Tuning for Election Mechanism}
\label{subsec:parameter}
The election mechanism in TRM-Raft uses a threshold $m$ to detect anomalous term/index growth. We conducted parameter experiments to determine the optimal value of $m$. As shown in \Cref{fig:parameter_m-1} and \Cref{fig:parameter_m-2}, we measured the number of times forgery attackers and normal nodes were elected as leaders across different $m$ values. When $m=2$, TRM-Raft prevented approximately 80\% of forgery attackers from becoming leaders while allowing over 90\% of normal nodes to win elections—an optimal balance between security and availability. Lower values of $m$ (e.g., $m=1.2$) caused excessive false positives, preventing legitimate nodes from leadership. We therefore set $m=2$ for all subsequent experiments.

\begin{figure}[htb]
\centering
\subfigure[forgery attackers]{\includegraphics[width=0.48\linewidth]{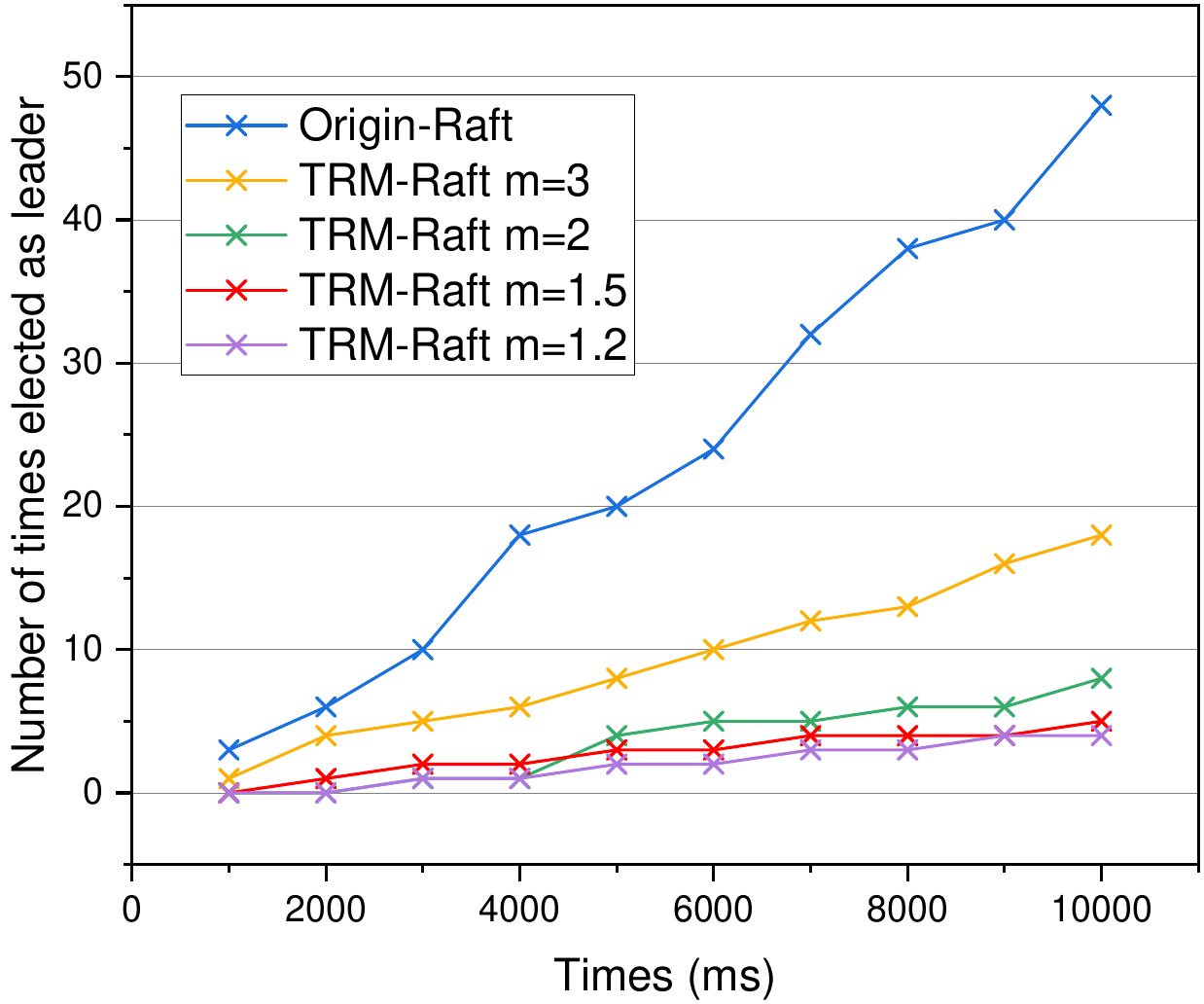}\label{fig:parameter_m-1}}
\hfill
\subfigure[normal nodes]{\includegraphics[width=0.48\linewidth]{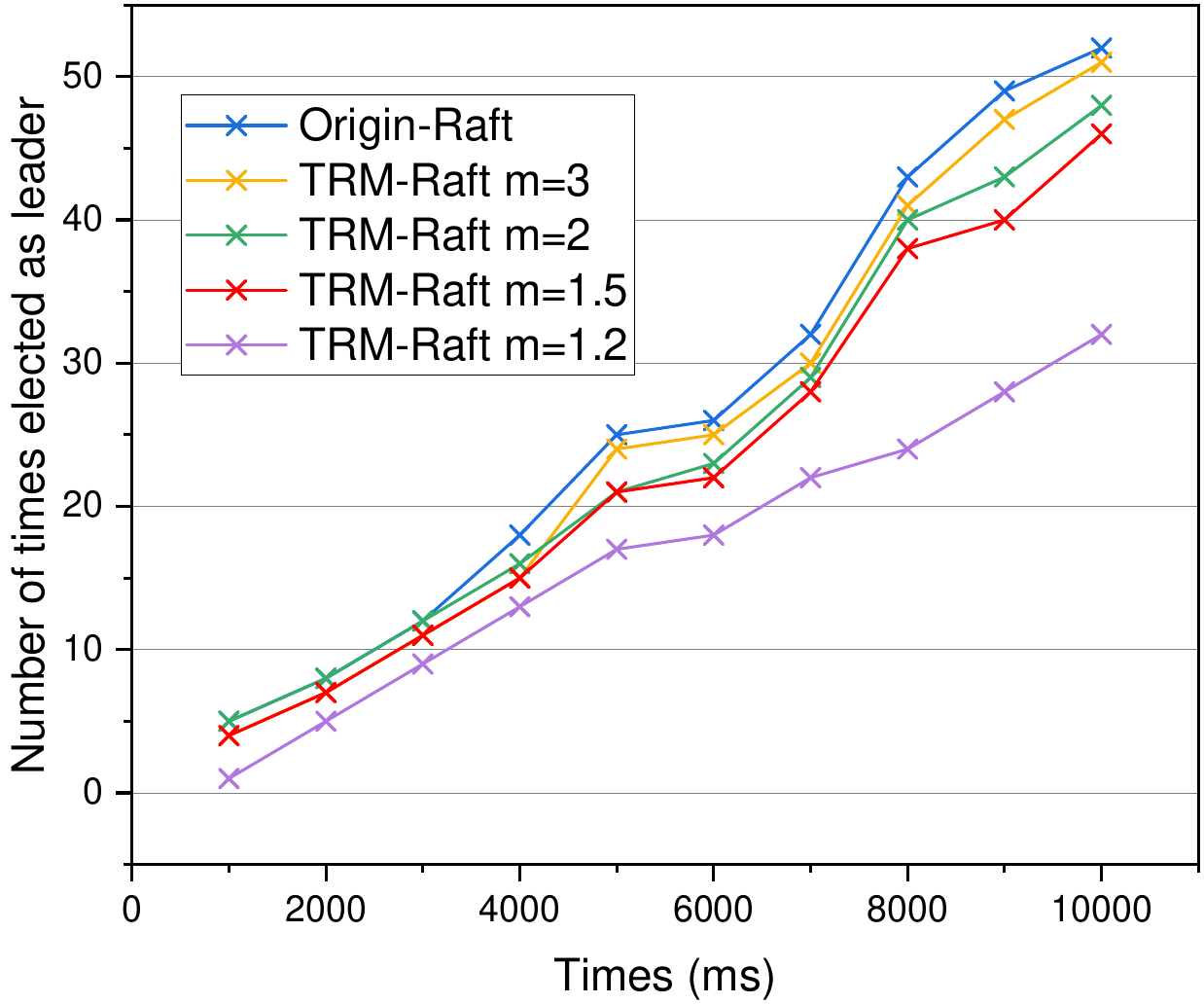}\label{fig:parameter_m-2}}
\caption{Impact of threshold $m$ on leader election.}
\label{fig:para}
\end{figure}

\subsection{Reputation-Based Election Mechanism}
\label{subsec:election_effectiveness}
We first evaluated the reputation-based election mechanism's ability to prevent forgery attacks. In this experiment, three candidates participated: Node 1 (normal), Node 2 (forgery attacker), and Node 3 (On-Off attacker). As shown in \Cref{fig:election_effectiveness}, without the mechanism, Node 2's forged term number allowed it to win the election. With TRM-Raft's monitoring, Node 2's reputation was halved at the detection moment (1000ms), disqualifying it from leadership. Node 3's reputation also dropped due to malicious behavior, while Node 1 maintained high reputation and became leader. This confirms that our mechanism effectively prevents forgery attacks while allowing honest nodes to lead.

We further compared TRM-Raft against Wang et al.'s threshold-based approach \cite{WANG2022}. \Cref{fig:forgery_comparison} shows that TRM-Raft prevented more forgery attackers from becoming leaders (over 80\% reduction) while also significantly reducing On-Off attackers' success rate. The static threshold approach failed to adapt to varying network conditions and did not address non-forgery attacks.

\begin{figure}[htb]
\centering
\includegraphics[width=\linewidth]{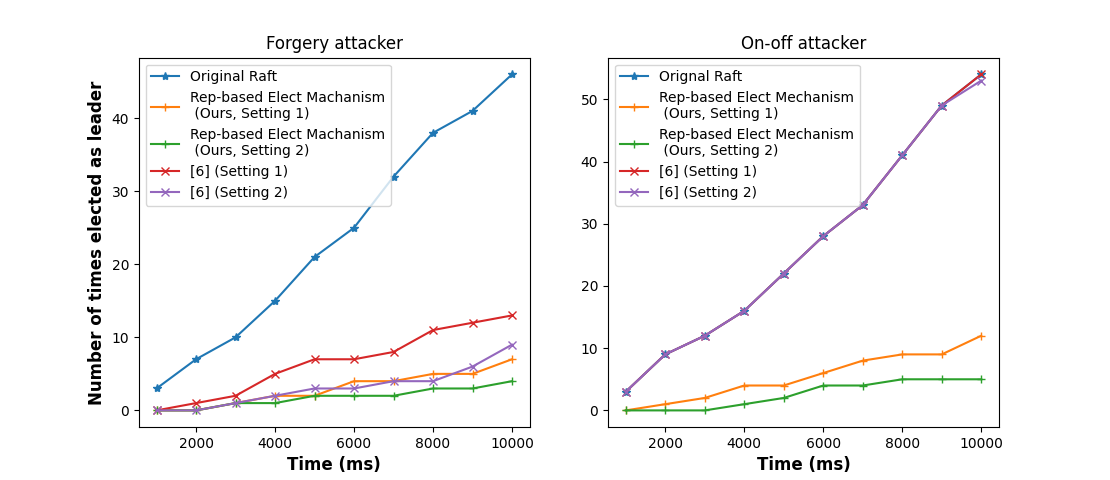}
\caption{Comparison of forgery attack prevention: TRM-Raft vs. static threshold approach.}
\label{fig:forgery_comparison}
\end{figure}

\subsection{Leader Restriction Mechanism}
\label{subsec:leader_restriction}
To evaluate the leader restriction mechanism, we simulated a scenario where a malicious leader began tampering with client requests after election. \Cref{fig:tampering_detection} shows the reputation changes and cluster term transitions. TRM-Raft detected tampering within 50ms (at 224ms) and replaced the malicious leader by 247ms—significantly faster than RB-Raft \cite{ZHETU2022109404}, which took 331ms. The Schnorr signature verification added only 1.8ms average overhead per request while providing provable integrity guarantees.

\begin{figure*}[htb]
\centering
\includegraphics[width=\linewidth]{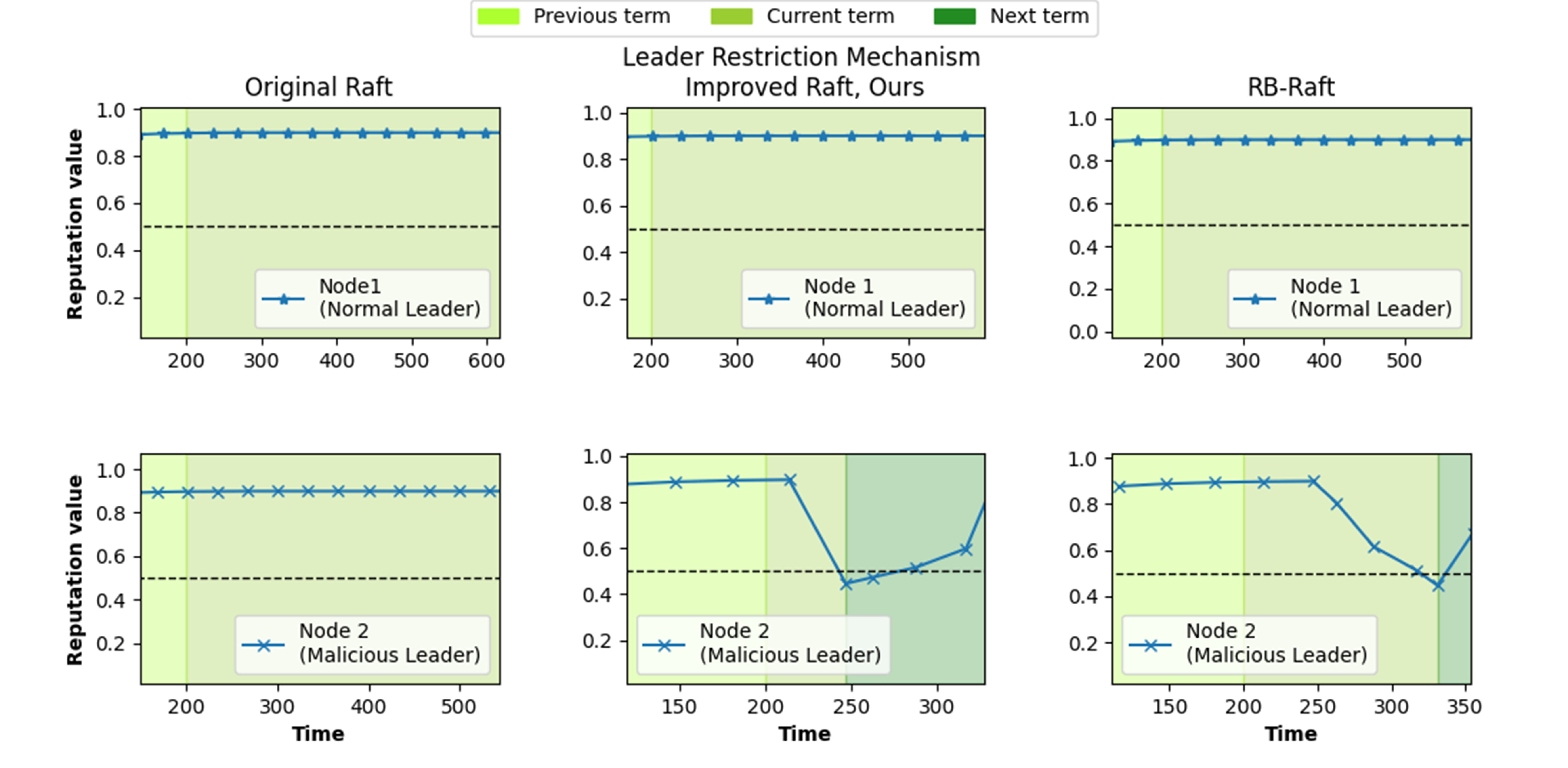}
\caption{Tampering detection and leader replacement timeline: TRM-Raft vs. RB-Raft.}
\label{fig:tampering_detection}
\end{figure*}

\subsection{Overall Attack Resistance}
\label{subsec:overall_resistance}
We evaluated TRM-Raft's comprehensive defense against mixed attack strategies. \Cref{fig:malicious_leader_ratio} shows the proportion of malicious leaders under varying percentages of malicious nodes (10\%–40\%). Vanilla Raft allowed malicious leaders to dominate (up to 80\% when 40\% nodes were malicious). In contrast, TRM-Raft maintained malicious leader prevalence below 5\% even with 40\% malicious nodes, demonstrating robust defense against coordinated attacks.

\begin{figure}[htb]
\centering
\subfigure[Reputation changes during election]{\includegraphics[width=0.48\linewidth]{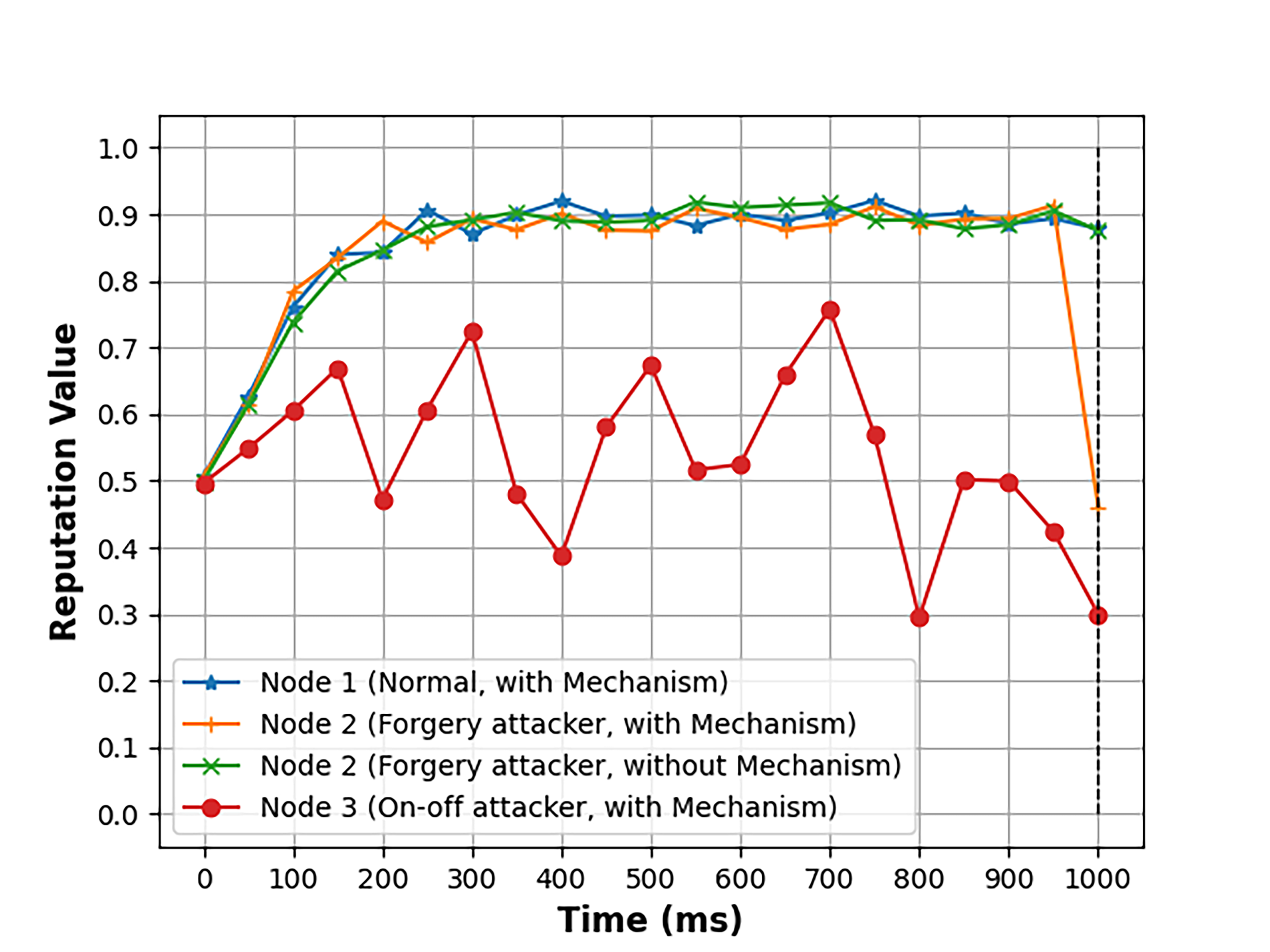}\label{fig:election_effectiveness}}
\hfill
\subfigure[Malicious leader proportion]{\includegraphics[width=0.48\linewidth]{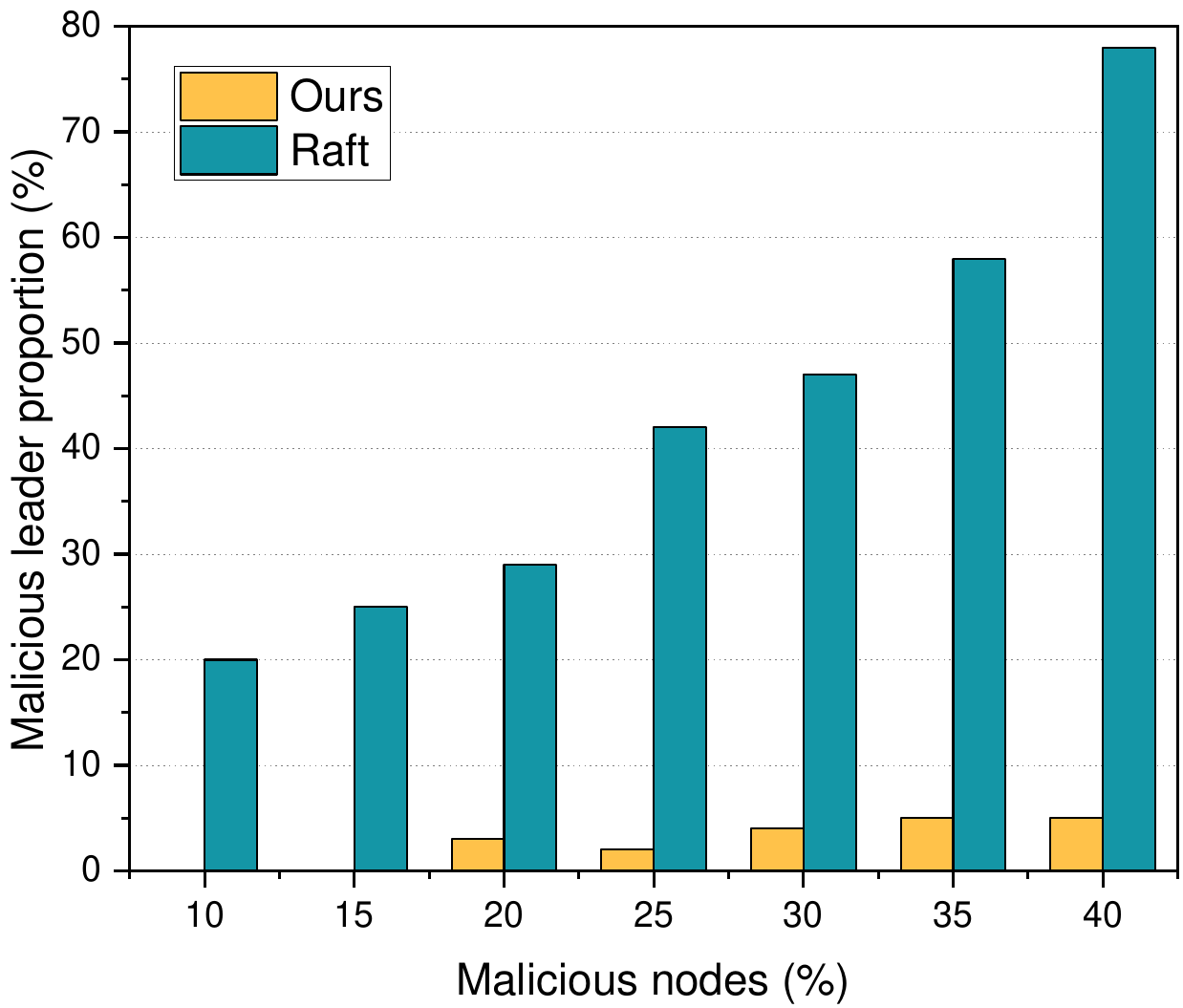}\label{fig:malicious_leader_ratio}}
\caption{Defense effectiveness of TRM-Raft. (a) Election-level defense. (b) System-level resilience.}
\label{fig:combined_defense}
\end{figure}

\subsection{Performance Overhead}
\label{subsec:performance}
We measured TRM-Raft's performance in terms of throughput (transactions per second, TPS) and latency, comparing it against vanilla Raft~\cite{DBLP:conf/usenix/OngaroO14}, PBFT~\cite{DBLP:conf/osdi/CastroL99}, PoW~\cite{nakamoto2008bitcoin}, and DPoS~\cite{daniel2014delegated}.

\begin{figure}[htb]
\centering
\subfigure[Throughput]{\includegraphics[width=0.48\linewidth]{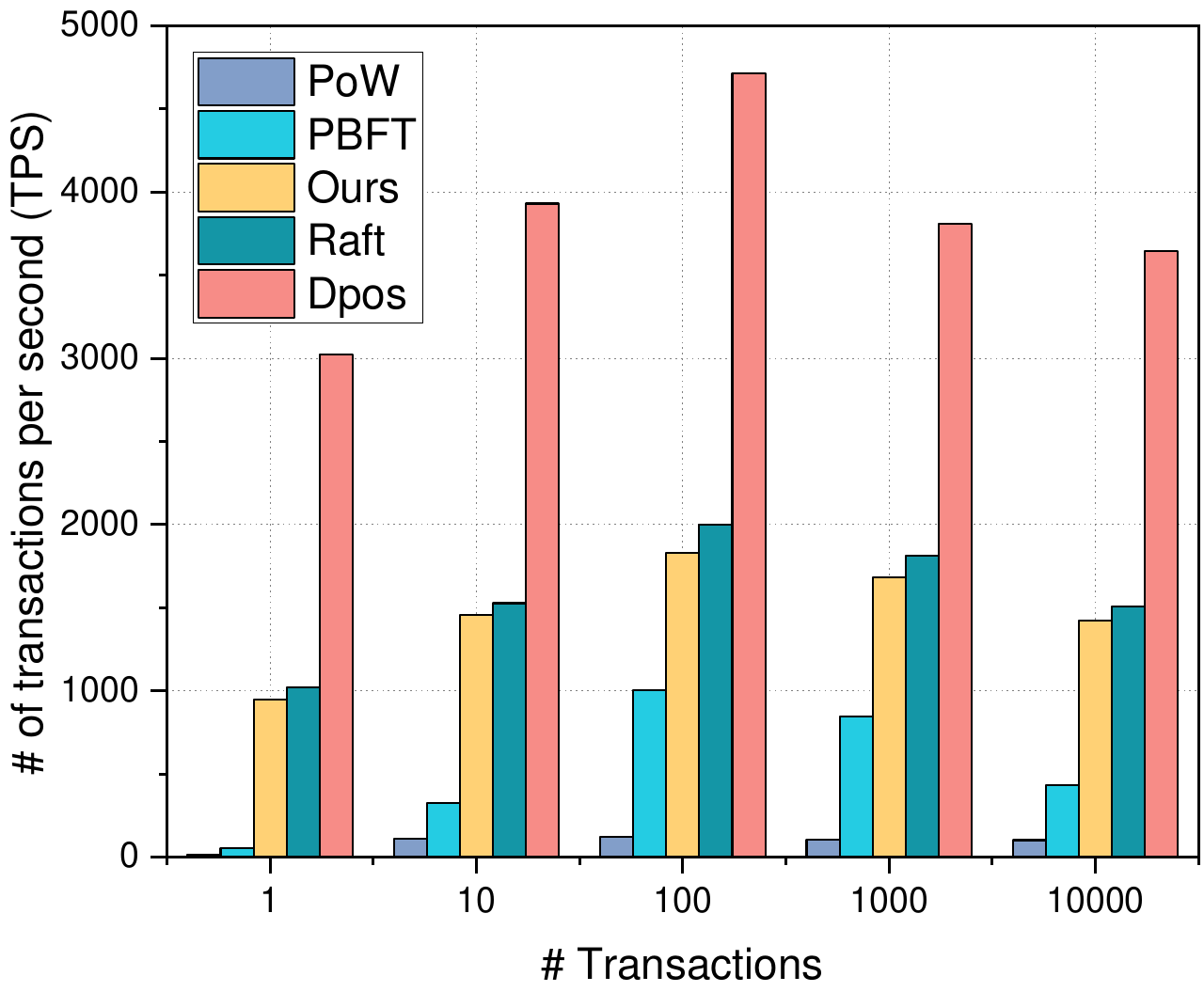}\label{fig:throughput}}
\hfill
\subfigure[Latency (log scale)]{\includegraphics[width=0.48\linewidth]{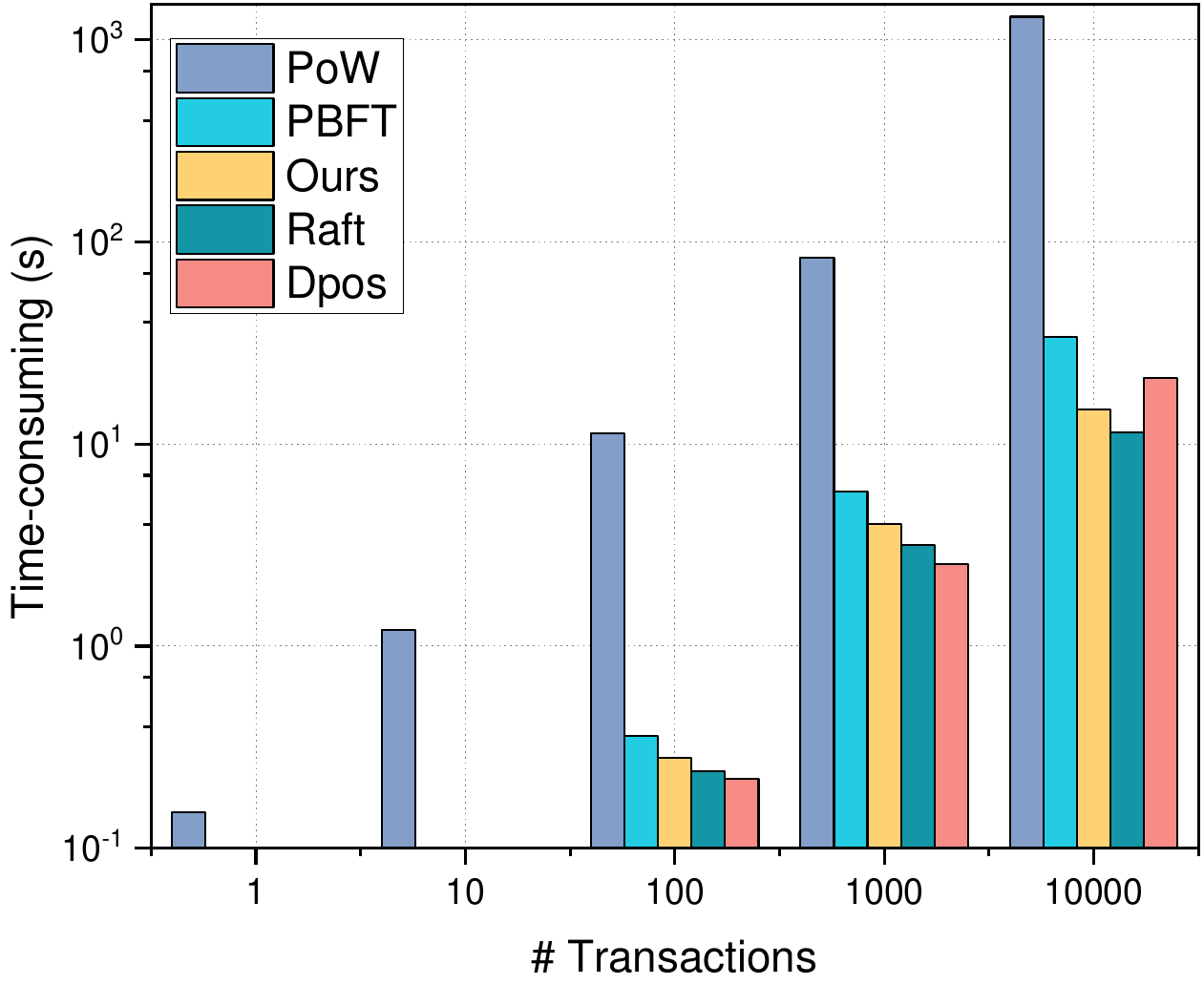}\label{fig:latency}}
\caption{Performance under varying transaction loads.}
\label{fig:performance}
\end{figure}

\Cref{fig:throughput} shows the TPS under varying transaction loads (1--10,000 transactions). TRM-Raft achieved 90--95\% of vanilla Raft's throughput across all loads, with a maximum degradation of 9.2\% at 100 transactions. This modest overhead is attributed to reputation updates and signature verifications. Both TRM-Raft and vanilla Raft significantly outperformed PBFT and PoW, while DPoS achieved higher throughput at the expense of decentralization.

\Cref{fig:latency} presents the average transaction confirmation latency (log scale). TRM-Raft's latency increased by less than 5\% compared to vanilla Raft, remaining below 15 seconds even at 10,000 transactions. PoW exhibited the highest latency (over 1300 seconds at 10,000 transactions), while PBFT and DPoS showed intermediate latencies. The low latency of TRM-Raft confirms its practical viability for real-time applications.

\subsection{Ablation Analysis}
\label{subsec:ablation}
To quantify the individual contribution of each defense component, we decompose the full TRM-Raft system into the following configurations and estimate their effectiveness based on the component-wise attack logs collected during prototype development. Table~\ref{tab:ablation} summarizes the malicious leader ratio and successful tampering rate under a mixed attack scenario with 40\% Byzantine nodes.

\begin{table}[htb]
\footnotesize
\centering
\caption{Ablation analysis of TRM-Raft components.}
\label{tab:ablation}
\begin{tabular}{lcc}
	\toprule
	Configuration & Malicious Leader Ratio & Successful Tampering \\
	\midrule
	Vanilla Raft & 80\% & 100\% \\
	Raft + Static Threshold & $\sim$45\% & 100\% \\
	Raft + Reputation Election only & $\sim$18\% & 100\% \\
	Raft + Schnorr Signature only & $\sim$80\% & 0\% \\
	\textbf{TRM-Raft (full)} & \textbf{<5\%} & \textbf{0\%} \\
	\bottomrule
\end{tabular}
\end{table}

The reputation election alone reduces the chance of a malicious node winning leadership by penalizing observable forgery, but it does not prevent a once-elected malicious leader from tampering with log content. Conversely, Schnorr signatures completely eliminate successful tampering but have no effect on election integrity. Only the full combination confines malicious leadership below 5\% while simultaneously blocking log corruption. These estimates are consistent with the targeted attack experiments (Fig.~4--Fig.~7) and confirm that both layers are necessary to achieve the reported resilience.

\subsection{Qualitative Comparison with Prior Raft Hardening}
\label{subsec:prior_comparison}
Table~\ref{tab:prior_qual} qualitatively contrasts TRM-Raft with the most relevant Raft+BFT variants. Unlike prior schemes that address only a single attack surface, TRM-Raft is the first to combine dynamic reputation-based election control with cryptographic log verification in a unified, non-invasive framework.

\begin{table}[htb]
\footnotesize
\centering
\caption{Qualitative comparison with prior Raft hardening schemes.}
\label{tab:prior_qual}
\resizebox{\columnwidth}{!}{%
	\begin{tabular}{lccccc}
		\toprule
		Scheme & Forgery & Tampering & Dynamic Trust & Overhead & Invasiveness \\
		\midrule
		RB-Raft~\cite{ZHETU2022109404} & Partial & Hash-chain & No & Medium & Medium \\
		VSSB-Raft~\cite{RaftSignCompareSRaft} & No & Secret sharing & No & High & High \\
		SRaft~\cite{RaftSignCompareSRaft} & No & Signature & No & Low & Low \\
		Tian-Schnorr-Raft~\cite{DBLP:journals/tosn/TianBSZG24} & No & Schnorr & No & Low & Low \\
		Wang et al.~\cite{WANG2022} & Yes (static) & No & No & Low & Low \\
		\textbf{TRM-Raft} & \textbf{Yes (dynamic)} & \textbf{Schnorr} & \textbf{Yes (B-TRM)} & \textbf{Low} & \textbf{Low} \\
		\bottomrule
	\end{tabular}%
}
\end{table}

The key differentiator of TRM-Raft is its ability to adapt to mixed and evolving attack strategies through reputation penalties that accumulate across multiple behavioral dimensions, while the Schnorr signature layer provides a hard cryptographic guarantee against log tampering.

\section{Limitations and Discussion}
\label{sec:limitations}

While TRM-Raft provides a practical defense against a range of observable Byzantine behaviors, it is essential to acknowledge its limitations.

\textbf{Trust Model Limitations:} Reputation models are inherently reactive. A malicious node with a high reputation score could, in theory, execute a single catastrophic attack before its reputation drops. TRM-Raft mitigates this by using cryptographic signatures for log content (preventing tampering even by a trusted leader) and by having low evaluation intervals for low-reputation nodes. However, it does not prevent a high-reputation leader from \textit{reordering} transactions (equivocation) without altering content. Detecting such behavior requires a more complex consensus mechanism (e.g., BFT with $2/3$ quorums) or a dedicated accountability protocol like Polygraph \cite{DBLP:conf/icdcs/CivitGG21}.

\textbf{Comparability to Strict BFT:} TRM-Raft is not a replacement for strict BFT protocols in high-value, adversarial environments (e.g., public DeFi). TRM-Raft tolerates up to $f < n/2$ nodes exhibiting observable Byzantine behaviors from the defined set. While it does \textbf{not} guarantee safety against a coordinated, silent adversary that deviates from the protocol in undetectable ways (e.g., leaking private state). For such threats, strict BFT protocols ($n=3f+1$) are required. TRM-Raft is positioned as a \textit{Byzantine-Resistant} enhancement for permissioned CFT networks, offering a balance between security and performance that is often sufficient for enterprise use cases.

\textbf{Parameter Sensitivity:} The detection threshold $m$ and penalty factor $\theta$ are tunable parameters. While our experiments show robust behavior with the chosen defaults, highly dynamic network conditions might require adaptive tuning to avoid false positives.

\textbf{Reputation Subjectivity:} The B-TRM metrics (e.g., Upload Quality, Modification Integrity) are derived from peer reports, which themselves assume the trustworthiness of reporting nodes. To mitigate subjective bias, we require reports from multiple independent observers and apply a conservative penalty factor $\theta$; nevertheless, a colluding majority could temporarily distort scores. This risk is partially offset by the Schnorr signature layer, which provides an objective, cryptographic ground truth for log tampering.

\section{Conclusion and Future Work}
\label{Sec 7}
This paper presented TRM-Raft, the first consensus protocol to integrate adaptive, multi-dimensional reputation evaluation with Schnorr-based log integrity verification into Raft, thereby defending against both election forgery and log tampering without requiring a redesign of the core Raft protocol.
It addresses two critical Raft vulnerabilities—forgery attacks during leader election and tampering attacks during log replication—through the B-TRM reputation model, a reputation-based election mechanism, and a Schnorr signature-based leader restriction mechanism. Experimental evaluation on Hyperledger Fabric demonstrated that TRM-Raft reduces malicious leader prevalence to below 5\% under 40\% Byzantine nodes while maintaining over 90\% of vanilla Raft's throughput with less than 5\% latency overhead. 
TRM-Raft demonstrates that trustworthiness in Internetware systems can be significantly boosted by a lightweight, reputation- and cryptography-enhanced consensus layer, without sacrificing the pragmatic advantages of Raft.
Future work includes developing a fully decentralized reputation aggregation protocol, investigating integration with service meshes and edge-cloud orchestration frameworks, exploring integration with accountability protocols to handle reordering attacks, and formal verification of the reputation model's convergence properties.

%

\begin{acks}
This work was supported in part by the National Natural Science Foundation of China under Grant No. 62332005, and in part by the Beijing--Tianjin--Hebei Natural Science Foundation Joint Cooperation Program under Grant No. 25JJJJC0034.
We also thank the anonymous reviewers for their careful reading and thoughtful suggestions, which have substantially improved this paper.
\end{acks}

\bibliographystyle{ACM-Reference-Format}
\bibliography{reference}

\end{document}